\DeclareFontFamily{U}{mathx}{\hyphenchar\font45}
\DeclareFontShape{U}{mathx}{m}{n}{
      <5> <6> <7> <8> <9> <10>
      <10.95> <12> <14.4> <17.28> <20.74> <24.88>
      mathx10
      }{}
\DeclareSymbolFont{mathx}{U}{mathx}{m}{n}
\DeclareMathAccent{\widecheck}{0}{mathx}{"71}
\DeclareMathAccent{\wideparen}{0}{mathx}{"75}
\DeclareMathAlphabet{\pazocal}{OMS}{zplm}{m}{n}
\newcommand{\Di}{\slashed{D}}
\newcommand{\Range}{{\operatorname{Range}}}
\newcommand{\hphi}{{\widehat{\phi}}}
\newcommand{\hvp}{{\widehat{\vp}}}
\newcommand{\haz}{{\widehat{\az}}}
\newcommand{\VVV}{{\mathscr{V}}}
\newcommand{\R}{\mathbb{R}}
\newcommand{\N}{\mathbb{N}}
\newcommand{\p}{{\partial}}
\newcommand{\matrice}[1]{\left[ \begin{matrix}
#1
\end{matrix} \right]}
\newcommand{\Det}{{\operatorname{Det}}}
\newcommand{\epsi}{\varepsilon}
\newcommand{\sgn}{{\operatorname{sgn}}}
\newcommand{\Tr}{{\operatorname{Tr}}}
\newcommand{\lr}[1]{\langle #1 \rangle}
\newcommand{\FF}{\pazocal{F}}
\newcommand{\tsigma}{{\tilde{\sigma}}}
\newcommand{\txi}{{\tilde{\xi}}}
\newcommand{\tb}{{\tilde{b}}}
\newcommand{\GG}{\pazocal{G}}
\newcommand{\supp}{\mathrm{supp}}
\newcommand{\BB}{\mathcal{B}}
\newcommand{\Z}{\mathbb{Z}}
\newcommand{\Bb}{\mathbb{B}}
\newcommand{\Ll}{\mathbb{L}}
\newcommand{\CC}{\pazocal{C}}
\newcommand{\Id}{{\operatorname{Id}}}
\newcommand{\VV}{{\pazocal{V}}}
\newcommand{\EE}{\pazocal{E}}
\newcommand{\MM}{\pazocal{M}}
\newcommand{\LL}{{\pazocal{L}}}
\newcommand{\HH}{\pazocal{H}}
\newcommand{\tHH}{\widetilde{\HH}{}}
\newcommand{\SSS}{{\pazocal{S}}}
\newcommand{\ZZZ}{{\pazocal{Z}}}
\newcommand{\Hh}{{\mathbb{H}}}
\newcommand{\vp}{{\varphi}}
\newcommand{\systeme}[1]{\left\{ \begin{matrix} #1 \end{matrix} \right.}
\newcommand{\tH}{{\widetilde{H}}}
\newcommand{\C}{\mathbb{C}}
\newcommand{\az}{\alpha}
\newcommand{\rk}{{\operatorname{rk}}}
\newcommand{\Mm}{\mathbb{M}}
\newcommand{\Tt}{{\mathbb{T}}}
\newcommand{\RRR}{\pazocal{R}}
\newcommand{\de}{ \ \mathrel{\stackrel{\makebox[0pt]{\mbox{\normalfont\tiny def}}}{=}} \ }
\numberwithin{equation}{section}
\title[Ubiquity of conical points in topological insulators]{Ubiquity of conical points in topological insulators}
\author{Alexis Drouot}
\newtheorem{thm}{Theorem}
\newtheorem{conj}{Conjecture}
\newtheorem{cor}{Corollary}
\newtheorem{defi}{Definition}
\newtheorem{lem}{Lemma}[section]
\newtheorem{theorem}[thm]{Theorem}
\theoremstyle{definition}
\newtheorem{rmk}{Remark}[section]
\begin{document}

\maketitle

\begin{abstract} We show that generically, the degeneracies of a family of Hermitian matrices depending on three parameters have a \textit{conical} structure. Our result applies to the study of topological phases of matter. It implies that adiabatic deformations of two-dimensional topological insulators come generically with \textit{Dirac-like} propagating currents, whose total conductivity equals the chiral number of conical points.
\end{abstract}

\section{Introduction}

Let $\EE$ be the space of $N \times N$ Hermitian matrices; $\EE^* \subset \EE$ consisting of matrices with simple eigenvalues; and $\Tt^2$ be a two-dimensional torus. Given $H_0$ and $H_1$ in $C^\infty(\Tt^2,\EE^*)$, is there a path from $H_0$ to $H_1$, that remains in $C^\infty(\Tt^2,\EE^*)$? 

In general, the response is no: there is a topological obstruction, related to the eigenbundles of $H_0$ and $H_1$. When this obstruction is present, \textit{any} path from $H_0$ to $H_1$ acquires degenerate eigenvalues. In this paper, we explore the shape of these crossings. We show that generically, they exhibit a conical structure. 

This result has a counterpart in the theory of topological phases of matter. When two topologically distinct insulators are adiabatically connected, it implies that generically:
\begin{itemize}
\item Finitely many channels supporting chiral currents appear;
\item Up to large times, these currents follow a Dirac equation and are concentrated (in phase-space) along conical eigenvalue crossings;
\item The chiral number of currents  equals the Chern number difference.
\end{itemize}
This establishes a quantitative link between (a) asymmetric currents; (b) eigenvalue crossings; and (c) the bulk-edge correspondence.

\subsection{Genericity of conical points}\label{sec:1.1} We first state our result in a form that applies to topological phases of matter. We postpone the general statement to \S\ref{sec:1.4}. 

Let $\Tt^2 = \R^2/(2\pi\Z)^2$ and $H_0, H_1$ be two elements of $C^\infty(\Tt^2,\EE)$, with eigenvalues $\lambda_1\big( H_j(\xi) \big) \leq \dots \leq \lambda_N\big( H_j(\xi) \big)$, repeated according to multiplicity. We assume that for some $n \in [0,N-1]$ and all $\xi \in \Tt^2$,
\begin{equation}\label{eq:1e}
\lambda_n\big(H_0(\xi)\big) < \lambda_{n+1}\big(H_0(\xi)\big), \ \ \ \  \lambda_n\big(H_1(\xi)\big) < \lambda_{n+1}\big(H_1(\xi)\big).
\end{equation}
Let $\LL$ be the set of smooth homotopies from $H_0$ to $H_1$:
\begin{equation}
\LL \de \Big\{ H \in C^\infty([0,1] \times \Tt^2, \EE), \ H(0,\cdot) = H_0, \ H(1,\cdot) = H_1\Big\}.
\end{equation}

\begin{defi}\label{def:1} If $H \in \LL$, we say that $\lambda_n(H)$ and $\lambda_{n+1}(H)$ cross (or degenerate) at $\zeta_0 = (s_0,\xi_0) \in [0,1] \times \Tt^2$ if  $\lambda_n\big(H(\zeta_0)\big) = \lambda_{n+1}\big(H(\zeta_0)\big)$. \\
\indent We say that $\lambda_n(H)$ and $\lambda_{n+1}(H)$ cross conically  if $\lambda_n\big( H(\zeta_0) \big)$ has multiplicity precisely two; and if there exist $a_0 \in \R^3$ and $S_0 \in M_3(\R)$ invertible such that 
\begin{equation}\label{eq:1f}
\systeme{
\lambda_n\big(H(\zeta_0+\epsi)\big) & = & \lambda_n\big( H(\zeta_0) \big) + \lr{a_0,\epsi} - \| S_0 \epsi \| + o(\epsi)
\\
\lambda_{n+1}\big(H(\zeta_0+\epsi)\big) & = & \lambda_n\big( H(\zeta_0) \big) + \lr{a_0,\epsi} + \| S_0 \epsi \| + o(\epsi)}, \ \ \ \ \text{$\epsi \in \R^3$ \ small. } 
\end{equation}
\end{defi}

\begin{figure}
\floatbox[{\capbeside\thisfloatsetup{capbesideposition={right,center},capbesidewidth=3in}}]{figure}[\FBwidth]
{\caption{(a) Eigenvalue surfaces of $H(s_0,\cdot)$ near a conical point $(s_0,\xi_0)$ of $H$. They intersect at the vertex of a (non-isotropic) cone. (b) Eigenvalue surfaces of $H(s,\cdot)$ for $s \neq s_0$ near $s_0$. They no longer touch.}\label{fig:2}}
{\begin{tikzpicture}
  \node at (-3,0) {\includegraphics[height=2in,trim={1.4cm 0cm 2cm 1.45cm}]{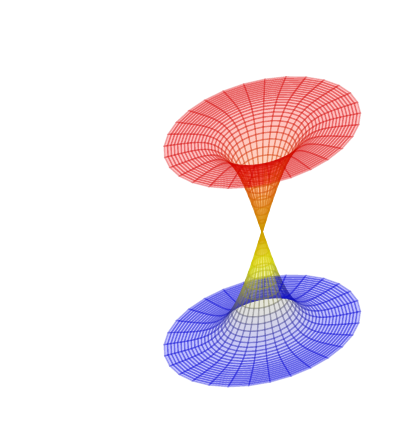}};

      \node at (1,0) {\includegraphics[height=2in,trim={-.5cm 0cm 1.5cm 1.45cm}]{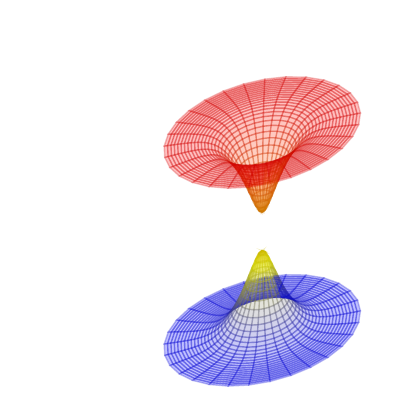}};
  
   \node at (-1.48,.9) {$\bullet$};

   \node at (3.8,.9) {$\bullet$};

\node at (-3.5,3.5) {(a)};
   \node at (2,3.5) {(b)};
   
\node[white] at (4.6,1) {(b)};
   \end{tikzpicture}}
\end{figure}

Conical degeneracies correspond to tilted cones in the graphs of eigenvalues -- see Figure \ref{fig:2}.
In particular, conical crossings of $\lambda_n(H)$ and $\lambda_{n+1}(H)$ are isolated. At first, we could think that they are rare among degeneracies: a non-empty intersection of two surfaces is in general a curve (rather than a point). Nonetheless:

\begin{thm}\label{thm:1} If $H_0$ and $H_1$ are elements of $C^\infty(\Tt^2,\EE)$ satisfying \eqref{eq:1e}, then
\begin{equation}
\Ll  = \big\{ H \in \LL : \ \text{all crossings of $\lambda_n(H)$ and $\lambda_{n+1}(H)$ are conical } \big\}
\end{equation}
is a dense open subset of $\LL$.
\end{thm}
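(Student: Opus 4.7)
If $H \in \Ll$, the conical form \eqref{eq:1f} forces crossings of $\lambda_n(H)$ and $\lambda_{n+1}(H)$ to be isolated, so by compactness of $[0,1] \times \Tt^2$ together with \eqref{eq:1e} (which excludes boundary crossings) only finitely many crossings $\zeta_1,\dots,\zeta_K$ occur, all interior. Outside small neighborhoods of them the spectral gap $\lambda_{n+1}(H)-\lambda_n(H)$ is uniformly positive, so no $C^0$-small perturbation $H'$ of $H$ creates additional crossings there. Near each $\zeta_k$, Rayleigh--Schr\"odinger reduction presents the two colliding eigenvalues as those of a $2\times 2$ effective Hamiltonian $\alpha(\epsi)\,\Id_2 + \sum_{j=1}^{3} b_j(\epsi)\,\sigma_j$ (with $\sigma_j$ the Pauli matrices), and \eqref{eq:1f} translates into invertibility of $S_0 \de db|_{\epsi=0} \in M_3(\R)$, a $C^1$-open condition. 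Hence $H'$ retains a single conical crossing near each $\zeta_k$, and $\Ll$ is open.

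\textbf{Density via stratified transversality.} For density, I would view a homotopy as a smooth map from the $3$-manifold $X \de [0,1] \times \Tt^2$ to $\EE$, and stratify $\EE$ by eigenvalue coincidence patterns. The stratum of interest is
\[
\Sigma \de \big\{ M \in \EE \ :\ \lambda_{n-1}(M) < \lambda_n(M) = \lambda_{n+1}(M) < \lambda_{n+2}(M) \big\}
\]
(dropping the outer inequalities when $n=0$ or $n+1=N$). The classical von Neumann--Wigner count shows $\Sigma$ is a smooth real submanifold of $\EE$ of codimension $3$, with normal space at $M$ canonically identified -- via the rank-$2$ spectral projector $P$ onto $\Ker(M - \lambda_n(M))$ -- with the traceless self-adjoint operators on $\Range(P) \simeq \C^2$, i.e., with $\spane(\sigma_1,\sigma_2,\sigma_3) \simeq \R^3$. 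All other strata that involve a $\lambda_n$--$\lambda_{n+1}$ degeneracy (multiplicity $\geq 3$, or a second simultaneous degeneracy) have codimension $\geq 6$ in $\EE$, hence are avoided by a generic $H$ since $\dim X = 3$.

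\textbf{Perturbation and conical conclusion.} To enforce transversality while staying in $\LL$, I would perturb by
\[
H_\theta(s,\xi) \de H(s,\xi) + \chi(s) \sum_{|\alpha| \leq R} \theta_\alpha\, e^{i\lr{\alpha,\xi}} A_\alpha,
\]
where $\chi \in C_c^\infty((0,1))$ is a fixed cutoff (so $H_\theta \in \LL$) and $\{A_\alpha\} \subset \EE$ is a finite family rich enough that the universal map $(\theta,\zeta) \mapsto H_\theta(\zeta)$ is a submersion on $\{\theta=0\} \times \big((0,1) \times \Tt^2\big)$. The parametric Thom transversality theorem then provides, for almost every and hence arbitrarily small $\theta$, a homotopy $H_\theta$ transverse to $\Sigma$ and disjoint from the higher strata. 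At any transverse $\zeta_0 \in H_\theta^{-1}(\Sigma)$, transversality is precisely the statement that $dH_\theta(\zeta_0) : \R^3 \to T_{H_\theta(\zeta_0)} \EE$ surjects onto the normal $\spane(\sigma_1,\sigma_2,\sigma_3)$, i.e., that the Jacobian $S_0 \de db|_{\epsi=0}$ of the Pauli coefficients in the $2\times 2$ model is invertible; expansion of $\alpha$ and $|b|$ then recovers \eqref{eq:1f}. \emph{The main obstacle} will be the parametric transversality bookkeeping -- choosing $\{A_\alpha\}$ and the frequency cutoff $R$ large enough to make the universal submersion condition hold pointwise, while the cutoff $\chi$ simultaneously enforces the endpoint conditions $H(0,\cdot)=H_0$ and $H(1,\cdot)=H_1$ and keeps the perturbation arbitrarily small.
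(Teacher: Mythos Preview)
Your argument is correct but takes a different route from the paper's.

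For openness the two arguments are essentially the same: the paper writes the local gap as $G(\zeta)=(\lambda_{n+1}-\lambda_n)^2$ via Cauchy integrals and uses stability of $\nabla^2 G>0$, while you use the equivalent condition that $db$ is invertible in the $2\times 2$ reduction.

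For density the paper does not invoke Thom transversality. It first proves the stronger Theorem~\ref{thm:2} (all degeneracies of a family over a closed $3$-manifold are generically conical) and then deduces Theorem~\ref{thm:1} by Seeley-extending $H$ to $\Tt^3$, applying Theorem~\ref{thm:2}, and correcting near $s=0,1$ with cutoffs. The proof of Theorem~\ref{thm:2} is hands-on: for $N=2$ it is Sard's theorem applied to the Pauli-coefficient map $h:X\to\R^3$; for $N\geq 3$ one first perturbs away matrices with $\leq N-2$ distinct eigenvalues via the Hausdorff-dimension count of Lemma~\ref{lem:1d}, then covers $X$ by finitely many charts on which $H$ block-diagonalizes into a $2\times 2$ piece plus a simple remainder, and iteratively fixes each chart (Lemma~\ref{lem:1c} guarantees that earlier corrections are not undone). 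Your one-shot parametric transversality is shorter and avoids both the iterative patching and the detour through a closed manifold; the paper's approach, in exchange, delivers Theorem~\ref{thm:2} and Corollary~\ref{cor:1} as byproducts, and its explicit constant-matrix perturbations are easier to export to the infinite-dimensional settings of \S\ref{sec:1.5}, where Thom transversality is less readily available.

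Two small points to tidy in your sketch: the terms $\theta_\alpha e^{i\lr{\alpha,\xi}} A_\alpha$ are not Hermitian as written --- in fact constant perturbations $\chi(s)\sum_j\theta_j A_j$ with $\{A_j\}$ a basis of $\EE$ already make the universal map a submersion wherever $\chi>0$, so the Fourier modes are unnecessary; and the cutoff $\chi$ cannot be fixed independently of $H$ but must be chosen so that $\{\chi=0\}$ lies in the collar near $s=0,1$ where \eqref{eq:1e} and continuity force a uniform gap for $H$.
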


The natural topology on $\LL$ is that induced by $C^\infty\big([0,1] \times \Tt^2,\EE\big)$, see \S\ref{sec:1.6}. Results at lower regularity are also possible; our techniques typically require $C^2$.

As a simple consequence of Theorem \ref{thm:1}, for generic $H \in \LL$, $\lambda_n(H)$ and $\lambda_{n+1}(H)$ cross at only finitely many points: conical crossings are isolated. Under a topological condition on $H_0$ and $H_1$, crossings must nonetheless arise. Indeed,  \eqref{eq:1e} allows us to define a rank-$n$ vector bundle $\VVV_0$ over $\Tt^2$: the fibers are 
\begin{equation}
\VVV_0(\xi) = 
\bigoplus_{j=1}^n \ker\Big( H_0(\xi) - \lambda_j\big( H_0(\xi) \big) \Big), \ \ \ \ \xi \in \Tt^2.
\end{equation}
We can also define $\VVV_1$, associated to $H_1$: only \eqref{eq:1e} is necessary to construct such vector bundles. Hence, if there is a homotopy between $H_0$ and $H_1$ that maintains \eqref{eq:1e}, then there are smooth vector bundles $\VVV_s \rightarrow \Tt^2$, $s\in [0,1]$, interpolating between $\VVV_0$ and $\VVV_1$. In particular, $\VVV_0$ and $\VVV_1$ would be topologically equivalent. 

This restriction can be measured via the Chern number -- the vector bundle analog of the Euler characteristic. This number can take any integer value, even in the context of eigenbundles -- see the appendix in  \cite{D19c} -- and characterizes the topology when the basis is a two-torus -- see e.g. \cite{P07,M17}. Thus, $\VVV_0$ and $\VVV_1$ are topologically equivalent if and only if $c_1(\VVV_0) = c_1(\VVV_1)$. In particular, if $H_0, H_1 \in C^\infty(\MM,\EE)$ satisfy \eqref{eq:1e} and $c_1(\VVV_0) \neq c_1(\VVV_1)$, then \textit{any} homotopy between $H_0$ and $H_1$ admits degeneracies. These, according to Theorem \ref{thm:1}, are generically all conical -- see Figure \ref{fig:1}. 

\begin{figure}
\floatbox[{\capbeside\thisfloatsetup{capbesideposition={right,center},capbesidewidth=2.8in}}]{figure}[\FBwidth]
{\hspace{-1cm}\caption{For each $n \in [1,N-1]$, $C^\infty(\Tt^2,\EE)$ splits in components distinguished by Chern numbers. If $H_0$ and $H_1$ lie in different components, a path joining $H_0$ to $H_1$ (blue) acquire crossings. Non-conical-type degeneracies (red) are rare in $C^\infty(\Tt^2,\EE)$. }\label{fig:1}}
{\begin{tikzpicture}
  \node at (0,0) {\includegraphics[height=2in]{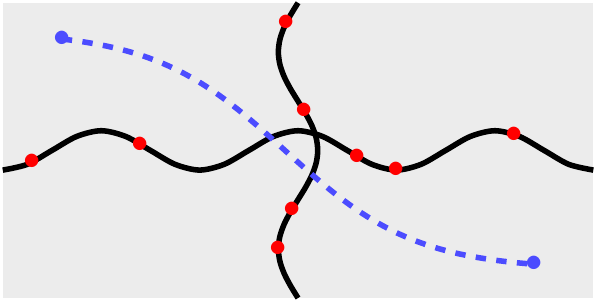}};

\node at (-4.2,-2) {$c_1 = 1$} ;
\node at (-4.2,.6) {$c_1 = 2$} ;
\node at (4,-.6) {$c_1 = 0$} ;
\node[blue!70] at (-4.5,2) {$H_0$} ;
\node[blue!70] at (4.5,-2) {$H_1$} ;
\node at (3.8,2) {$C^\infty(\Tt^2,\EE)$} ;

   \end{tikzpicture}}
\end{figure}

\subsection{Connection with topological phases of matter}\label{sec:1.2}  We review tight-binding, translation-invariant models of insulators at an energy $\lambda_0 \in \R$. These systems are represented by selfadjoint Hamiltonians $\HH_0 : \ell^2(\Z^2,\C^N) \rightarrow \ell^2(\Z^2,\C^N)$ with:
\begin{equation}\label{eq:1a}
[\HH_0,T_j] = 0, \ \ \ \ (T_j \psi)_m = \psi_{m+e_j}; \ \ \ \ \text{and} \ \ 
\lambda_0 \notin \sigma(\HH_0).
\end{equation}
In \eqref{eq:1a}, $\sigma(\HH_0)$ denotes the $\ell^2(\Z^2,\C^N)$-spectrum of $\HH_0$. Physically, $\lambda_0 \notin \sigma(\HH_0)$ means that there is no plane-wave propagation at energy $\lambda_0$.

Thanks to \eqref{eq:1a} and $[T_1,T_2] = 0$, we can diagonalize $\HH_0$, $T_1$ and $T_2$ simultaneously. The eigenvalues of $T_j$ are $e^{i\xi_j}$, $\xi_j \in \Tt^1 = \R/(2\pi \Z)$. Joint eigenspaces of $T_1$ and $T_2$ canonically identify with $\C^N$:
\begin{equation}
\bigcap_{j=1}^2 \ker\left(T_j-e^{i\xi_j}\right) = \left\{ \big(e^{i\xi m} \psi_0\big)_{m \in \Z^2} \ : \ \psi_0 \in \C^N\right\} , \ \ \ \ \xi =(\xi_1,\xi_2) \in \Tt^2.
\end{equation}
Thus, the analysis of $\HH_0$ reduces to that of its Bloch transform: the $\Tt^2$-parametrized family of $N \times N$ Hermitian matrices
\begin{equation}\label{eq:3b}
H_0(\xi) = e^{-i\xi m} \cdot \HH_0 \cdot e^{i\xi m}, \ \ \ \ \xi \in \Tt^2.
\end{equation}
The insulating condition $\lambda_0 \notin \sigma(\HH_0)$ and the spectral decomposition of $\HH_0$ into $\{ H_0(\xi) \}_{\xi \in \Tt^2}$ imply that $\lambda_0$ is never in $\sigma\big( H_0(\xi) \big)$. Thus, $H_0$ satisfies \eqref{eq:1e}. 

A standard question in topological phases of matter is whether two materials can be deformed to each other while maintaining their electronic properties. If $\HH_1$ is another insulator at energy $\lambda_0$, with associated vector bundle $\VVV_1$ of rank $n$, then $H_1$ also satisfies \eqref{eq:1e}. As explained in \S\ref{sec:1.1}, if $c_1(\VVV_0) \neq c_1(\VVV_1)$, then there are no path $\{\HH_s\}_{s \in [0,1]}$ connecting $\HH_0$ and $\HH_1$ while maintaining \eqref{eq:1a}. Physically, two topologically distinct insulators cannot be deformed to one another without passing by a conductor.

 Theorem \ref{thm:1} explains quantitatively this failure. Generically, conical crossings arise as one transitions from $\HH_0$ to $\HH_1$. The quantity $c_1(\VVV_1) - c_1(\VVV_0)$ is fundamental in the analysis of interface effects between topological insulators; see e.g. \cite{RH08,B19b,D19c}. Below, we express it as the number of conical crossings, counted according to chirality. 
 
 Assume that $H \in \Ll$ and $\lambda_n(H)$ and $\lambda_{n+1}(H)$ degenerate conically at $\zeta_0$; and define $(J f)_j  = \lr{f, f_j}$, where $(f_1,f_2)$ is an orthogonal basis of $\ker\big(\lambda_n(\zeta_0) -H(\zeta_0)\big)$. We write a Taylor expansion of the $2 \times 2$ matrix $J  H(\zeta_0+\epsi)  J^*$ near $\epsi = 0$:
\begin{equation}\label{eq:1g}
 J  H(\zeta_0+\epsi)  J^* = J  H(\zeta_0)  J^* + \sum_{j=1}^3 (A_0 \epsi)_j \cdot  \sigma_j + O(\epsi^2),
 \end{equation}
where $\sigma_1, \sigma_2, \sigma_3$ are the standard Pauli matrices  and $A_0 \in M_3(\R)$.
Using the conical structure, $A_0$ is invertible -- see \eqref{eq:0r} below. The quantity $\sgn\big(\det(A_0) \big)$ is called the chirality of the conical point.

\begin{theorem}\label{thm:3} Let $H \in \Ll$, such that $\lambda_n(H)$ and $\lambda_{n+1}(H)$ degenerate conically precisely at $\zeta_1, \dots, \zeta_K$. If $\sgn\big(\det(A_1) \big), \dots, \sgn\big(\det(A_K) \big)$ are the associated chiralities, then
\begin{equation}\label{eq:3k}
c_1(\VVV_1) - c_1(\VVV_0) = \sum_{k=1}^K \sgn\big(\det(A_k) \big).
\end{equation}
\end{theorem}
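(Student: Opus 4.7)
The strategy is to express $c_1(\VVV_s)$ as the integral over $\{s\}\times\Tt^2$ of the first Chern form of the spectral projector onto the lowest $n$ bands of $H$, and then apply Stokes' theorem on the slab $[0,1] \times \Tt^2$. Since $H \in \Ll$, the crossings of $\lambda_n(H)$ and $\lambda_{n+1}(H)$ form a finite set $\{\zeta_1, \dots, \zeta_K\}$ (by Theorem \ref{thm:1}, all are conical and isolated). Define $P(s,\xi)$, for $(s,\xi) \in [0,1]\times \Tt^2 \setminus \{\zeta_1,\dots,\zeta_K\}$, to be the orthogonal projector onto $\bigoplus_{j=1}^n \ker\big(H(s,\xi) - \lambda_j(H(s,\xi))\big)$. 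The spectral gap makes $P$ smooth off the $\zeta_k$, and the Chern form $\Omega \de \frac{i}{2\pi}\trace\big(P\, dP \wedge dP\big)$ is a smooth closed 2-form there, with $\int_{\{s\}\times \Tt^2} \Omega = c_1(\VVV_s)$ for every $s$ avoiding the crossing times.

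Excise small pairwise disjoint balls $B_k$ around each $\zeta_k$, and apply Stokes' theorem on $[0,1]\times\Tt^2 \setminus \bigcup_k B_k$. Closedness of $\Omega$ yields
\begin{equation*}
c_1(\VVV_1) - c_1(\VVV_0) \ = \ \sum_{k=1}^K \int_{\partial B_k} \Omega,
\end{equation*}
with $\partial B_k$ carrying its standard outward orientation as the boundary of $B_k$. The radii are irrelevant: another Stokes argument on annular shells shows $\int_{\partial B_k}\Omega$ is independent of the sphere chosen around $\zeta_k$ inside the smooth region.

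Localize the contribution at each $\zeta_k$. By Kato perturbation theory, on a neighborhood of $\zeta_k$ the projector splits as $P = \Pi_k^{\mathrm{reg}} + \Pi_k^{-}$, where $\Pi_k^{\mathrm{reg}}$ projects onto the $n-1$ non-degenerate low-lying bands -- smooth across $\zeta_k$ because those eigenvalues stay isolated from $\lambda_n(H(\zeta_k))$ -- and $\Pi_k^-$ is the rank-one projector onto the $\lambda_n$-eigenvector of the $2\times 2$ block $J H J^*$. Orthogonality $\Pi_k^{\mathrm{reg}}\Pi_k^- = 0$ splits the Chern form additively. The smooth piece contributes nothing: its Chern form extends smoothly and closed to all of $B_k$, so Stokes gives $0$. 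Hence each local contribution reduces to the Berry flux of the line bundle $\Pi_k^-$ across the 2-sphere $\partial B_k$.

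Finally, compute this flux. After dropping the scalar matrices in \eqref{eq:1g} (they do not alter eigenvectors), $J H J^*$ reduces, up to an $O(\epsi^2)$ remainder, to the linear Dirac Hamiltonian $(A_k \epsi)\cdot \sigma$. Since the Chern integer is invariant under gap-preserving deformations, the higher-order remainder can be scaled out by $\epsi \mapsto \lambda\epsi$, $\lambda \to 0^+$, and the flux equals that of the lower-eigenvector bundle of $(A_k \epsi)\cdot \sigma$ on any small sphere centered at the origin of $\R^3$. A change of variables pulls this back along $x \mapsto A_k x/|A_k x|$, a map of degree $\sgn(\det A_k)$; combined with the unit flux of the standard Dirac monopole, the local contribution becomes $\sgn(\det A_k)$, with the sign convention in $\Omega$ fixed precisely so that \eqref{eq:3k} holds. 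Summing over $k$ completes the argument. The principal technical hurdle is this last step: carefully tracking the orientations of $\partial B_k$, the choice of lower versus upper eigenvector, and the sign of $\det(A_k)$, so that all conventions line up with the chirality sign in \eqref{eq:3k}.
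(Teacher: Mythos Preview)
Your proof is correct and follows a genuinely different route from the paper's own argument. The paper proceeds ``slice by slice'': it writes $c_1(\VVV_1)-c_1(\VVV_0)$ as a sum of jumps $c_1(\VVV_{s_\star+\delta})-c_1(\VVV_{s_\star-\delta})$ at the finitely many degenerate times $s_\star$, localizes each jump near the degenerate momenta $\xi_\star$, and then computes the limiting integral of the Berry curvature over $\{s_\star\pm\delta\}\times\Tt^2$ via explicit asymptotics of the projector (expanding $JHJ^*$ in Pauli matrices, applying Gram--Schmidt and the $SU(2)/\{\pm\Id_2\}\simeq SO(3)$ identification, changing coordinates, and invoking the explicit monopole formula from \cite{FC13}). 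Your argument instead works directly on the three-manifold $[0,1]\times\Tt^2$: you use Stokes' theorem for the closed $2$-form $\Omega$ on the complement of small balls around the $\zeta_k$, reducing to sphere integrals; then you invoke homotopy invariance of the Chern integer to scale away the $O(\epsi^2)$ remainder and compute the model flux by a degree argument. Your approach is more geometric and bypasses the explicit asymptotic estimates on $\pi$, $\nabla\pi$ of \cite[\S2]{D19b}; the paper's approach, in exchange, gives pointwise control of the Berry curvature near conical points, which is of independent interest. One point worth making explicit in your write-up: when you pass from the rank-one projector $\Pi_k^-$ in $\C^N$ to the lower-eigenvector projector of the $2\times2$ block $JHJ^*$, you are implicitly using that the full eigenvector of $H$ differs from its $J^*$-lift only by corrections that can be homotoped away while preserving the gap on $\partial B_k$; this is true, but it is the one place where ``Kato perturbation theory'' is doing real work and deserves a sentence.
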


Theorem \ref{thm:1} guarantees that $\Ll \neq \emptyset$ -- in fact, that $\Ll$ is a residual set.

 \subsection{Relation with adiabatic transport and bulk-edge correspondence} In this section, we explain the physical consequences of Theorems \ref{thm:1} and \ref{thm:3} on transport in adiabatic deformations of topological insulators.

Let $\HH_0$ and $\HH_1$ be two Hamiltonians satisfying \eqref{eq:1a}. Let $\{\HH_s\}_{s\in [0,1]}$ be a homotopy between $\HH_0$ and $\HH_1$;  extend $\HH_s$ by $\HH_0$ for $s \leq 0$ and by $\HH_1$ for $s \geq 1$. For $\delta > 0$, we define a Hamiltonian $\HH^{\delta}$ by
\begin{equation}\label{eq:3a}
\big(\HH^\delta \psi\big)_m = \big(\HH_{\delta m_2} \psi\big)_m,  \ \ \ \ \psi \in \ell^2(\Z^2,\C^N), \ \ m = (m_1,m_2) \in \Z^2.
\end{equation}
The operator $\HH^\delta$ models a (spatial) deformation from $\HH_0$ to $\HH_1$ transversely to $\R e_1$, occurring at speed $\delta$. 

We are interested in the adiabatic scaling: $\delta \rightarrow 0$. This regime has an important place in the mathematical physics litterature; see e.g. \cite{S83,B84,PST02,
FT16}. It corresponds to changing $\HH_0$ to $\HH_1$ globally (i.e. on a scale $\delta^{-1} \gg 1$) while preserving translation-invariance locally (i.e. on a scale $\delta^{-1/2}$ -- note $1 \ll \delta^{-1/2} \ll \delta^{-1}$).

\begin{figure}
{\begin{tikzpicture}
  \node at (0,0) {\includegraphics[height=2.5in]{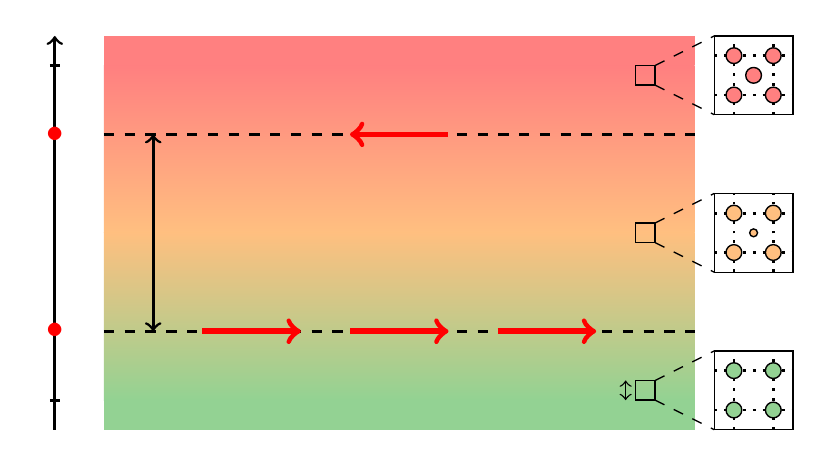}};


\node at (-4.6,2.7) {$s$};
\node at (-5.2,2.3) {$1$};
\node[red] at (-5.25,1.3) {$s_2$};
\node[red] at (-5.25,-1.3) {$s_1$};
\node at (-5.2,-2.3) {$0$};

\node at (-.5,-2.2) {$c_1(\VVV_0) = 1$};
\node at (-.5,0) {$c_1(\VVV_{1/2}) = 4$};
\node at (-.5,2.2) {$c_1(\VVV_1) = 3$};

\node at (-3,0) {$\delta^{-1}$};
\node at (2.2,-2) {$\delta^{-1/2}$};

\node at (5.65,-2.2) {$\HH_0$};
\node at (5.75
,0) {$\HH_{1/2}$};
\node at (5.65,2.2) {$\HH_1$};

   \end{tikzpicture}}
\captionsetup{width=0.5\textwidth}{
\caption{When deforming adiabatically two topological insulators $\HH_0$ and $\HH_1$, one must pass discontinuity channels for Chern numbers ($s= s_1, s_2$). These support a signed number of currents equal to the jump of Chern numbers. 
}\label{fig:6}}
\end{figure}

Generically, $\lambda_n(H_{\delta n_2})$ and $\lambda_{n+1}(H_{\delta n_2})$ do not degenerate for most values of $\delta n_2$. For such values, we can define the local Chern number of $\HH^\delta$ at $(n_1,\delta n_2)$: it is that of $H_{\delta n_2}$. The local Chern number is discontinuous at degeneracies, see Figure \ref{fig:6}.

In adiabatic domain-wall deformations of honeycomb structures, edge states arise and are concentrated near Dirac points (isotropic conical points) \cite{FLW16,LWZ17,D19a,DW19}. At leading order, they propagate according to an emerging Dirac operator, in the direction prescribed by chirality.

The analysis of \cite{FLW16,LWZ17,D19a,DW19} is local in nature and would extend beyond Dirac points.  Theorem \ref{thm:1} shows that degeneracies are generically conical. Hence, the Dirac-type propagation of edge states is universal in the adiabatic regime. See \S\ref{sec:1.5} and the appendix for more details.

In analogy with \cite{D19a,D19b,DW19}, the total number of edge states, signed according to propagation, is the sum over chiralities. From Theorem \ref{thm:2}, it is the total Chern number difference. Hence, \eqref{eq:3k} is a form of the bulk edge correspondence, the left-hand-side playing the role of an edge index -- see \cite{H93,KRS02,EGS05,ASV13,GP13,PS16,BKR17,D19c}. While the interface between $H_0$ and $H_1$ has width $\delta^{-1}$, the asymmetric transport described above concentrates in finitely many strips of width $\delta^{-1/2}$ (corresponding to jumps of local Chern number). This is a much thinner region. This concentration phenomenon -- valid only in the adiabatic regime -- is not captured by the bulk-edge correspondence.

\subsection{General statement}\label{sec:1.4} Theorem \ref{thm:1} will be the consequence of a stronger statement. Let $X$ be a smooth compact manifold of dimension 3.

\begin{defi}\label{def:2} If $H \in C^\infty(X,\EE)$, we say that $H$ has a degeneracy at $x_0 \in X$ if $H(x_0)$ admits repeated eigenvalues.\\
\indent We say that this degeneracy is conical if for some $n \in [0,N-1]$:
\begin{itemize}
\item[(i)] $\lambda_n\big( H(x_0) \big) = \lambda_{n+1}\big( H(x_0) \big)$ and all other eigenvalues of $H(x_0)$ are simple;
\item[(ii)] There exist $\ell \in C^\infty(\Omega,\R)$ and $q \in C^\infty\big(\Omega,[0,\infty)\big)$ with a non-degenerate critical value zero at $x_0$ such that
\begin{equation}\label{eq:1k}
\systeme{
\lambda_n\big(H(x)\big) & = & \ell(x) -  \sqrt{q(x)}
\\
\lambda_{n+1}\big(H(x)\big) & = & \ell(x) +  \sqrt{q(x)}}, \ \ \ \ x \text{ near } x_0.
\end{equation}
\end{itemize}
\end{defi}

For degeneracies of precisely double multiplicity, the mere estimate \eqref{eq:1f} is  equivalent to the smooth identity \eqref{eq:1k}; see \S\ref{sec:2.1}. In other words, Definition \ref{def:2} corresponds to Definition \ref{def:1}, with the additional requirement (i).

\begin{theorem}\label{thm:2} When $\dim(X) = 3$, the set
\begin{equation}\label{eq:1o}
\Mm = \big\{ H \in C^\infty(X,\EE) \ : \ \text{all degeneracies of M in  are conical } \big\}
\end{equation}
is dense and open in $C^\infty(X,\EE)$.
\end{theorem}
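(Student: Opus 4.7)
The plan is to apply Thom's transversality theorem to a stratification of $\EE$ by eigenvalue-multiplicity type. For an ordered partition $\pi = (m_1, \dots, m_r)$ of $N$, let $\EE_\pi \subset \EE$ be the locus of Hermitian matrices whose distinct eigenvalues, listed in increasing order, have multiplicities $(m_1, \dots, m_r)$. Parameterizing $\EE_\pi$ by its $r$-tuple of distinct eigenvalues and by the orthogonal splitting $\C^N = V_1 \oplus \dots \oplus V_r$ (with $\dim V_i = m_i$), which lives in $U(N)/(U(m_1) \times \dots \times U(m_r))$, gives $\dim \EE_\pi = r + N^2 - \sum_i m_i^2$, so $\mathrm{codim}_\EE \EE_\pi = \sum_i (m_i^2 - 1)$. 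In particular, the open stratum $\EE^*$ has codimension $0$; the strata $\EE_{n,2}$ where only $\lambda_n = \lambda_{n+1}$ are degenerate have codimension $3$; every other singular stratum has codimension at least $6$ (two separate double eigenvalues) or $8$ (one triple eigenvalue).

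For density I apply the parameterized transversality theorem to the evaluation $\mathrm{ev} : C^\infty(X,\EE) \times X \to \EE$, $(H,x) \mapsto H(x)$: this map is a submersion because at any $(H_0, x_0)$ the differential in the $H$-direction is the surjection $\dot H \mapsto \dot H(x_0)$. For each stratum $\EE_\pi$ the set $\RRR_\pi \subset C^\infty(X,\EE)$ of $H$ whose map $X \to \EE$ is transverse to $\EE_\pi$ is therefore residual. Since $N$ is fixed, only finitely many partitions $\pi$ occur, so the intersection $\RRR = \bigcap_\pi \RRR_\pi$ is residual, hence dense. For $H \in \RRR$: because $\dim X = 3$, transversality to a stratum of codimension $\geq 4$ forces $H(X) \cap \EE_\pi = \emptyset$, while transversality to a codimension-$3$ stratum $\EE_{n,2}$ produces a $0$-dimensional intersection -- discrete and, by compactness of $X$, finite.

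I then promote transversality at a degeneracy $x_0$ to the conical splitting of Definition \ref{def:2}. The Riesz-projector construction provides a smooth rank-$2$ spectral projector $P(x)$ on a neighborhood of $x_0$ whose range corresponds to $\lambda_n, \lambda_{n+1}$. Choose a smooth orthonormal frame $(f_1(x), f_2(x))$ of $\Range P(x)$ and form the smooth $2 \times 2$ Hermitian matrix $M(x) = (\lr{H(x) f_j(x), f_i(x)})_{ij}$; its eigenvalues are $\lambda_n(H(x))$ and $\lambda_{n+1}(H(x))$. Writing $M(x) = \ell(x) I + \sum_{j=1}^3 a_j(x) \sigma_j$ with $\ell, a_j \in C^\infty$ real and $a_j(x_0) = 0$, the normal space to $\EE_{n,2}$ at $H(x_0)$ identifies with $\mathrm{span}_\R(\sigma_1, \sigma_2, \sigma_3)$, so transversality of $H$ to $\EE_{n,2}$ at $x_0$ becomes invertibility of the Jacobian $D_{x_0}(a_1, a_2, a_3) : T_{x_0} X \to \R^3$. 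Setting $q = a_1^2 + a_2^2 + a_3^2$, this invertibility forces $q$ to be smooth with a non-degenerate critical point of value $0$ at $x_0$ (Hessian $2 D^\top D > 0$), and the eigenvalues $\ell \pm \sqrt{q}$ of $M$ yield the form \eqref{eq:1k}.

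Openness is comparatively soft. For $H \in \Mm$ the finite degeneracy set admits disjoint neighborhoods $\{U_k\}$ on whose complement the consecutive eigenvalue gaps of $H$ are uniformly positive and survive small $C^0$ perturbations. Inside each $U_k$ the Jacobian invertibility condition is $C^1$-open, and the implicit function theorem produces, for every sufficiently small $C^2$-perturbation of $H$, a unique nearby transverse (hence conical) degeneracy in $U_k$. Thus $\Mm$ is $C^2$-open, a fortiori $C^\infty$-open. The main obstacle is the preceding paragraph: upgrading the pointwise transversality condition to the smooth splitting \eqref{eq:1k} rather than merely the asymptotic estimate \eqref{eq:1f}. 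This hinges on the smooth reduction to a $2 \times 2$ Hermitian block via Riesz projectors and on the identification, in the Pauli basis, of the normal direction to $\EE_{n,2}$ with $(\sigma_1, \sigma_2, \sigma_3)$, after which the Morse-type structure of $q$ is automatic.
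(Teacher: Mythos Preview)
Your proof is correct and takes a genuinely different route from the paper's.

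The paper argues openness via the \emph{discriminant}: it shows (Lemma~\ref{lem:1a}) that conical degeneracies of $H$ are exactly the non-degenerate critical zeros of $D\circ H$, and then uses that both $D\circ H$ and $\Det[\nabla^2(D\circ H)]$ are $C^2$-stable. For density, the paper first removes the high-multiplicity locus by a direct Hausdorff-dimension estimate (Lemma~\ref{lem:1d}), then covers $X$ by finitely many trivializing charts, on each of which the degenerate pair reduces to a smooth $2\times2$ block, and applies Sard to this block; the delicate step is an \emph{iterative patching} (\S\ref{sec:3.4}) ensuring that correcting one chart does not reintroduce bad points in earlier ones. You instead invoke the parametric transversality theorem once on the multiplicity stratification of~$\EE$: since the evaluation map is a submersion and the non-simple strata have codimension $\geq 3$ (with equality only for a single double eigenvalue), a residual set of $H$ is transverse to every stratum, misses all codimension-$\geq 6$ strata, and meets each $\EE_{n,2}$ in a finite set. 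Your Riesz-projector computation then upgrades transversality at such a point to the Morse form~\eqref{eq:1k}; this step is essentially the same local $2\times2$ reduction the paper uses in~\S\ref{sec:3.4}, but deployed for a different purpose.

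What each approach buys: yours is cleaner and avoids the patching entirely---Thom packages the Sard-type argument globally. The paper's is more elementary (Sard only, no Thom/jet machinery) and more constructive: it exhibits the perturbation explicitly as a constant shift in the local $2\times2$ block, and the discriminant characterization of Lemma~\ref{lem:1a} is a self-contained tool that also drives the openness proof. Your openness argument, while correct, implicitly re-uses the Riesz-projector reduction to define the perturbed $(a_1,a_2,a_3)$ on each $U_k$; the paper's discriminant route sidesteps this by working with a single globally defined scalar function.
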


According to the von Neumann--Wigner theorem \cite{NW29}, $\EE \setminus \EE^*$ has codimension $3$ in $\EE$. Since $\dim(X) = 3$, the range $H(X)$ of $H$ has Hausdorff dimension at most $3$. Thus, generically, $H(X) \cap \left( \EE \setminus \EE^*\right)$ has Hausdorff dimension $0$; see Figure \ref{fig:3}. This result is closely related to various work about rarity of degenerate eigenvalues in mathematical physics; see e.g. \cite{C91,A95,T99}.

Theorem \ref{thm:2} completes \cite{NW29}: it shows that the degeneracies of a $3$-dimensional family of matrices are conical. In particular, generic elements in $C^\infty(X,\EE)$ have finitely many degeneracies.  As an immediate corollary with $X = \Tt^3$:

\begin{cor}\label{cor:1} The degeneracies of Bloch eigenvalues of a generic $\Z^3$-invariant Hamiltonian on $\ell^2(\Z^3,\C^N)$ are all conical.
\end{cor}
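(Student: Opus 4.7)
The plan is to apply Theorem~\ref{thm:2} directly with $X = \Tt^3$, using the Bloch transform to identify $\Z^3$-invariant Hamiltonians with smooth families $H \in C^\infty(\Tt^3,\EE)$. This mirrors the two-dimensional discussion of \S\ref{sec:1.2}, lifted one dimension up.

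More precisely, I would first set up the Bloch framework in dimension three. A bounded selfadjoint operator $\HH$ on $\ell^2(\Z^3,\C^N)$ that commutes with the three translation operators $T_j$, $j=1,2,3$, is simultaneously diagonalized with them; on the joint eigenspaces $\bigcap_j \ker(T_j - e^{i\xi_j})$ (parametrized by $\xi \in \Tt^3 = \R^3/(2\pi\Z)^3$) the operator $\HH$ acts as
\begin{equation*}
H(\xi) \de e^{-i\xi m} \cdot \HH \cdot e^{i\xi m} \in \EE.
\end{equation*}
If $\HH$ has sufficiently rapidly decaying matrix elements, then $\xi \mapsto H(\xi)$ lies in $C^\infty(\Tt^3,\EE)$, and conversely every such family arises from a unique $\HH$. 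Under this identification, a point $\xi_0 \in \Tt^3$ is a degeneracy (resp.\ conical degeneracy) of the Bloch eigenvalues of $\HH$ precisely when the family $H$ has a degeneracy (resp.\ conical degeneracy) at $\xi_0$ in the sense of Definition~\ref{def:2}.

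Second, since $\Tt^3$ is a smooth compact manifold of dimension three, Theorem~\ref{thm:2} applied with $X=\Tt^3$ says that the subset
\begin{equation*}
\Mm = \big\{ H \in C^\infty(\Tt^3,\EE) \ : \ \text{all degeneracies of $H$ are conical}\big\}
\end{equation*}
is dense and open in $C^\infty(\Tt^3,\EE)$. Pulling this back through the Bloch transform yields the desired density and openness for $\Z^3$-invariant Hamiltonians, equipped with the corresponding $C^\infty$ topology on their Bloch families.

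There is essentially no obstacle beyond Theorem~\ref{thm:2} itself: the corollary is a direct specialization, its only content being the observation that $\Tt^3$ qualifies as a smooth compact three-manifold and that the Bloch transform intertwines eigenvalue structure on both sides. The only mild subtlety worth flagging is the choice of functional setting on $\ell^2(\Z^3,\C^N)$ which guarantees smoothness of the Bloch family (e.g.\ Schwartz decay of hopping coefficients), so that one works within $C^\infty(\Tt^3,\EE)$ rather than a rougher space.
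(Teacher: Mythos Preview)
Your proposal is correct and matches the paper's own treatment: the corollary is stated as an immediate specialization of Theorem~\ref{thm:2} with $X=\Tt^3$, via the Bloch transform identification of $\Z^3$-invariant Hamiltonians with families in $C^\infty(\Tt^3,\EE)$. The paper gives no further argument, and none is needed beyond what you have written.
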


\begin{figure}
\floatbox[{\capbeside\thisfloatsetup{capbesideposition={right,center},capbesidewidth=3in}}]{figure}[\FBwidth]
{\caption{The range $H(X) \subset \EE$ of $H$ has (typical) dimension $3$, while $\EE \setminus \EE^*$ has codimension $3$. Generically, $H(X)$ and $\EE \setminus \EE^*$ intersect tranversely, along a set of dimension $0$.
}\label{fig:3}}
{\begin{tikzpicture}
  \node at (0,0) {\includegraphics[height=2in,trim={2cm 0cm 3cm 2cm}]{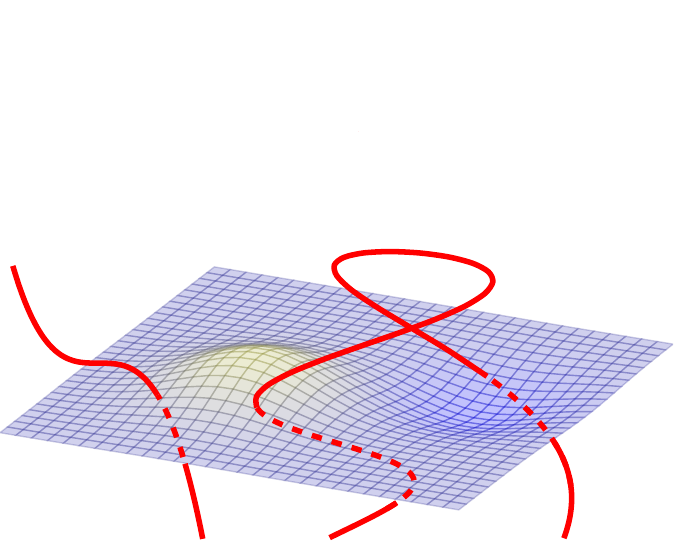}};

\node at (-0.38,-.5) {$\bullet$}; 
\node at (2.8,0) {$\bullet$}; 
 \node at (-1.8,-0.25) {$\bullet$}; 
 \node[white] at (4,-0.85) {$\bullet$}; 
 \node[white] at (-4,-0.2) {$\bullet$}; 
\node[blue] at (-1,1) {$H(X)$}; 
\node[red] at (3.7,1.4) {$\EE \setminus \EE^*$};

   \end{tikzpicture}}
\end{figure}

\subsection{Relation with existing work and perspectives} \label{sec:1.5}

The present work contrasts with earlier results in  tight-binding, quantum graphs, and continuous graphene models \cite{W47,C91,KP07,FW12,FLW18,L18}. These papers use the symmetries of the hexagonal lattice to show existence of Dirac points. 

The present paper is not symmetry-driven. It is instead topology-driven: conical points arise generically when trying to connect two topologically distinct Hamiltonian, and no other type of degeneracies may form. 

When connecting two topologically distinct Hamiltonians, assymetric currents appear along the interface: the celebrated edge states. Theorems \ref{thm:1} implies that \textit{generic} edge states of adiabatic systems on $\Z^2$ have amplitudes that, after rescaling, evolve according to a universal Dirac-like equation:
\begin{equation}\label{eq:0p}
\big( D_t - \Di(x_2, D_x) \big) \beta = 0, \ \ \ \ D_x = -i \p_x,
\end{equation}
where $\Di(x_2,\xi)$ is a family of $2 \times 2$ matrices which depends linearly in $x_2$ and $\xi$. We refer to the appendix for a formal derivation of \eqref{eq:0p}. A full proof would somewhat be transverse to this work; see \cite{FLW16,D19a,DW19} for a derivation in a slightly different context. See also \cite{FG03,F04,B19a,B19c} for direct work on \eqref{eq:0p}.

This Dirac-type propagation should also appear universally in continuous systems -- see e.g. \cite{RH08,FLW16,D19a,DW19} for honeycombs. This would require to extend Theorem \ref{thm:2} to differential operators. After some relatively standard reductions, the techniques developed here can treat systems on $L^2(\R^2)$ (corresponding to $N = \infty$). However they would yield a physically moot genericity result: it would hold within a class much larger than differential operators. We refer to \cite{C91,K16} for some interesting related conjectures, and formulate our own:

\begin{conj} The set
\begin{equation}
\big\{ V \in C^\infty(\R^3/\Z^3) :  \ \text{all degeneracies of Bloch eigenvalues of} -\Delta_{\R^3}+V \text{are conical }\big\}
\end{equation}
is dense and open in $C^\infty(\R^3/\Z^3)$.
\end{conj}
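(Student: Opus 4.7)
The plan is to emulate Theorem~\ref{thm:2} with the quasi-momentum $\xi \in \Tt^3$ playing the role of $x \in X$, and the Bloch fiber operator $H(V,\xi) = (-i\nabla+\xi)^2 + V$ on $L^2(\Tt^3)$ replacing the finite-dimensional matrix. Each $H(V,\xi)$ has compact resolvent with eigenvalues $\lambda_1(V,\xi) \leq \lambda_2(V,\xi) \leq \cdots$; I would set
\begin{equation*}
\Mm_n = \big\{ V \in C^\infty(\Tt^3) : \text{every crossing of } \lambda_n(V,\cdot) \text{ and } \lambda_{n+1}(V,\cdot) \text{ is conical}\big\}
\end{equation*}
and show each $\Mm_n$ is dense and open. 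The conjecture then follows once one upgrades the countable intersection $\bigcap_n \Mm_n$ to an open set, which I expect to require uniform Weyl-type control of eigenvalue spacings across bands --- nontrivial but presumably tractable given the smoothness of $V$.

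Openness of $\Mm_n$ follows from the stability of conical degeneracies under small perturbations of the smooth matrix family $\xi \mapsto H_{\eff}(V,\xi)$ extracted by a Feshbach--Schur reduction near each degeneracy, combined with the compactness of $\Tt^3$: for $V \in \Mm_n$, the crossings of $\lambda_n(V,\cdot)$ and $\lambda_{n+1}(V,\cdot)$ are finite, and the $\lambda_{n+1}-\lambda_n$ gap off their neighborhoods is preserved under $C^0$-perturbation of $V$.

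The heart of the argument is density. Given $V_0$ admitting a non-conical degeneracy at $\xi_0$ of multiplicity $m \geq 2$, I would apply a Feshbach--Schur reduction onto the $m$-dimensional near-eigenspace of $H(V_0,\xi_0)$ to produce an effective smooth Hermitian family $H_{\eff}(V,\xi)$ of size $m \times m$ that locally captures the spectral behaviour between $\lambda_n$ and $\lambda_{n+1}$. By Theorem~\ref{thm:2}, it is enough to show that the map $(V,\xi) \mapsto H_{\eff}(V,\xi)$ is \emph{submersive} in an appropriate sense: perturbations of $V$ must exhaust the tangent space of Hermitian $m \times m$ matrices, and the three partial derivatives $\p_{\xi_1}H_{\eff}, \p_{\xi_2}H_{\eff}, \p_{\xi_3}H_{\eff}$ must be linearly independent at $(V_0,\xi_0)$. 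At first order,
\begin{equation*}
\delta H_{\eff} = \big(\lr{\delta V\,\psi_i,\psi_j}\big)_{i,j=1}^m,
\end{equation*}
where $\psi_1,\dots,\psi_m$ span the degenerate eigenspace, so the required surjectivity reduces to the real-linear independence of the $m^2$ functions $\{\Re(\psi_i\overline{\psi_j}), \Im(\psi_i\overline{\psi_j})\}_{i\leq j}$ inside $C^\infty(\Tt^3,\R)$.

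The principal obstacle is precisely this independence of products of eigenfunctions. It can fail at special $V_0$: for $V_0=0$, eigenfunctions are plane waves and their products collapse onto a lattice of plane waves, producing algebraic coincidences. A preliminary generic perturbation must first break such resonances, presumably through a unique-continuation argument or a Fourier-analytic estimate tailored to Schr\"odinger eigenfunctions on $\Tt^3$. A parallel transversality in $\xi$ must hold via the Hellmann--Feynman identity $\p_{\xi_k}H_{\eff} = (\lr{(-2i\p_{x_k}+2\xi_k)\psi_i,\psi_j})_{i,j}$, which can itself degenerate at high-symmetry momenta and must be controlled in tandem. This tension between the rigidity of the Schr\"odinger operator and the topological genericity sought is what leaves the statement a conjecture rather than a theorem.
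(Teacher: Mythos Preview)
The statement you are attempting to prove is labeled \emph{Conjecture} in the paper, and the paper gives no proof. In \S\ref{sec:1.5} the author explicitly remarks that while the techniques of the paper extend to infinite-dimensional fiber operators, they would only yield genericity in a class ``much larger than differential operators'' --- i.e., perturbing by arbitrary bounded operators rather than by multiplication operators $V$. The author then formulates the statement as an open problem. So there is no paper proof to compare against; what follows is an assessment of your strategy on its own merits.

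Your outline is the natural one and correctly isolates the obstruction. The Feshbach--Schur reduction to a finite effective family $H_\eff(V,\xi)$ is standard, and your observation that first-order perturbations in $V$ act via the matrix $\big(\lr{\delta V\,\psi_i,\psi_j}\big)_{i,j}$ pinpoints exactly why the problem is hard: one must show that the products $\psi_i\overline{\psi_j}$ are real-linearly independent in $C^\infty(\Tt^3,\R)$ for a dense set of $V$. This is precisely the step that separates the differential-operator case from the matrix case, and it is not supplied --- you yourself flag it in the final paragraph as the reason the statement remains conjectural. Your candidate tools (unique continuation, Fourier-analytic arguments) are plausible but not worked out, and the literature on independence of eigenfunction products for periodic Schr\"odinger operators does not, to my knowledge, contain a result of the required strength.

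There is a second genuine gap you mention only in passing: the passage from each $\Mm_n$ being dense open to $\bigcap_n \Mm_n$ being open (not merely residual). Weyl asymptotics alone do not give this; one would need, for instance, that for each fixed $V$ only finitely many band pairs $(\lambda_n,\lambda_{n+1})$ can touch, uniformly over a $C^\infty$-neighborhood of $V$. This is false in general for periodic Schr\"odinger operators without further structure, so either the conjecture should be read as ``residual'' rather than ``open'', or an additional argument is needed. In short: your proposal is a correct diagnosis of the difficulties, not a proof.
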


\subsection{Organization} We start with the proof of Theorem \ref{thm:2}. In \S\ref{sec:2}, we prove that $\Mm$ is open. This relies on the fact that conical points correspond precisely to critical values zero of the matrix discriminant. 
In \S\ref{sec:3}, we prove that $\Mm$ is dense. When $N=2$, this boils down to an algebraic identity combined with Sard's theorem. For $N \geq 3$, it relies on a reduction to the case $N=2$.

Theorem \ref{thm:1} follows from Theorem \ref{thm:2}, as explained in \S\ref{sec:4}. The proof of Theorem \ref{thm:3} is independent of the rest of the paper. It relies on arguments from \cite{D19b} -- see \S\ref{sec:5}. In the appendix, we explain the origin of the effective Dirac equation \eqref{eq:0p}.

\subsection{Notations}\label{sec:1.6}
\begin{itemize}
\item Given $N \in \N$, $\EE$ denotes the space of $N \times N$ Hermitian matrices , $\EE^* \subset \EE$ denotes matrices with simple eigenvalues; and $\FF \subset \EE$ consists of matrices with at most $N-2$ distinct eigenvalues. We provide these spaces with the (Hilbertien) norm $\|A\|^2 = \Tr_{\C^N}(A^2)$. 
\item Given a smooth compact manifold $X$, $\MM$ is the space $C^\infty(X,\EE)$; and $\Mm \subset \MM$ consists of elements in $\MM$ with only conical degeneracies -- see \S\ref{sec:1.4}. We fix a Riemannian structure on $X$, with Levi--Civita connection $\nabla$. The space $C^k(X,\EE)$ is the closure of $C^\infty(X,\EE)$ in $C^0(X,\EE)$, for the norm
\begin{equation}
\| H \|_{C^k} = \sup \left\{ \big\| H(x) \big\| + \big\| \nabla^k H(x) \big\|, \  x \in X \right\}, \ \ \ \ \ \ H \in \MM = C^\infty(X,\EE).
\end{equation}
It has a structure of Banach algebra. The space $\MM$ inherits a structure of complete metric space, with distance
\begin{equation}\label{eq:0b}
d(H,\tH) = \sum_{k=0}^\infty 2^{-k} \dfrac{\| H-\tH \|_{C^k}}{1+\| H-\tH \|_{C^k}}, \ \ \ \ \ \ H, \tH \in \MM.
\end{equation}
\item The space $C^\infty([0,1] \times \Tt^2,\EE)$ consists of Hermitian-valued smooth functions functions on $(0,1) \times \Tt^2$, whose derivatives extend continuously to $[0,1] \times \Tt^2$ -- also provided with the norm \eqref{eq:0b}. 
\item Given $H_0, H_1 \in C^\infty(\Tt^2, \EE)$ satisfying \eqref{eq:1e}, the space $\LL \subset C^\infty([0,1] \times \Tt^2,\EE)$ consists of smooth paths connecting $H_0$ to $H_1$. The space $\Ll \subset \LL$ consists of paths whose $n$-th and $n+1$-th eigenvalues degenerate conically -- see \S\ref{sec:1.1}.
\item  The Hausdorff dimension of a set $\SSS$ is denoted $\dim_\HH(\SSS)$.
\item The Pauli matrices are
\begin{equation}
\sigma_0 = \matrice{1 & 0 \\ 0 & 1}, \ \ \sigma_1 = \matrice{0 & 1 \\ 1 & 0}, \ \ \sigma_3 = \matrice{0 & -i \\ i & 0}, \ \ \sigma_3 = \matrice{1 & 0 \\ 0 & -1}.
\end{equation}
They form a basis of the space $\EE_0$ of traceless Hermitian $2 \times 2$ matrices.
\item If $x \in \R^3$ and $r > 0$,  $\Bb(x,r)$ is the ball centered at $x$ of radius $r$.
\end{itemize}

\noindent \textbf{Acknowledgments.} I thankfully acknowledge support from NSF
DMS-1440140 (MSRI, Fall 2019) and DMS-1800086, and from the Simons Foundation
through M. I. Weinstein’s Math+X investigator award \#376319.

\section{$\Mm$ is open}\label{sec:2}

We recall that $\MM = C^\infty(X,\EE)$. In this section, we show that the set $\Mm$ defined in \eqref{eq:1o} is open in $\MM$. In \S\ref{sec:2.1} we review the discriminant $D(A)$ of a matrix $A$. This is a quantity depending smoothly on the entries, whose zero set corresponds to matrices with degeneracies.

We then identify conical degeneracies of elements of $\MM$ with non-degenerate critical points of $D(H)$. Because of the stability of such  points,  $\Mm$ is open in $\MM$ -- see \S\ref{sec:2.2}.

\subsection{Discriminant and conical points}\label{sec:2.1} The discriminant of a matrix is the (square of the) Vandermonde determinant of the eigenvalues:
\begin{equation}\label{eq:1h}
D(A) = \prod_{j \neq k} \big(\lambda_j(A)-\lambda_k(A)\big) = \prod_{j < k} \big(\lambda_j(A)-\lambda_k(A)\big)^2, \ \ \ \ A\in \EE.
\end{equation}
It is a symmetric polynomial in $\lambda_1(A), \dots, \lambda_N(A)$. Thus, by the fundamental theorem of linear algebra, it is a polynomial in the quantities $\sum_{j=1}^m \lambda_j(A)^m = \Tr[A^m]$ -- see e.g. \cite[\S I.2]{M95}. In particular, $D(A)$ depends smoothly on $A$. 

The discriminant detects degenerate eigenvalues: $D(A) = 0$ if and only if $A \in \EE^*$. In fact, it even identifies conical degeneracies.

\begin{lem}\label{lem:1a} $H \in \MM$ has a conical degeneracy at $x_0$ if and only if $D \circ H$ has a non-degenerate critical value, zero, at $x_0$.
\end{lem}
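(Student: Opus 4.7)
The plan is to reduce both directions to a statement about the single smooth function $q(x) = \tfrac14(\lambda_n(x)-\lambda_{n+1}(x))^2$ by separating $D\circ H$ into a factor that detects the degenerate pair and a harmless smooth positive factor. The key algebraic input is that although individual eigenvalues of $H(x)$ are generally not smooth at a crossing, all symmetric polynomials in the eigenvalues of any spectral cluster are smooth (via the Riesz projection onto that cluster).

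First, I will establish the following local factorization near $x_0$. Let $\mu_1,\dots,\mu_p$ be the distinct eigenvalues of $H(x_0)$ with multiplicities $m_1,\dots,m_p$, and let $A_k(x)$ be the smooth $m_k\times m_k$ Hermitian family obtained by restricting $H(x)$ to the Riesz projection onto the $k$-th cluster. Because the intra-cluster discriminant $D_k(x) := D(A_k(x))$ is a symmetric polynomial in the $k$-th cluster's eigenvalues, it is smooth, and separating intra- from inter-cluster factors yields
\begin{equation*}
D(H(x)) \ = \ G_0(x) \cdot \prod_{k=1}^p D_k(x), \qquad G_0(x) := \prod_{k\ne \ell}\prod_{i\in I_k,\,j\in I_\ell}\bigl(\lambda_i(x)-\lambda_j(x)\bigr)^2,
\end{equation*}
where $G_0$ is smooth (each factor is a symmetric polynomial in two clusters of eigenvalues) and strictly positive in a neighborhood of $x_0$.

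For the forward direction, assume $H$ has a conical degeneracy at $x_0$. Definition \ref{def:2}(i) forces exactly one $m_k$ to equal $2$ (corresponding to $\lambda_n=\lambda_{n+1}$) and all others to equal $1$; then $\prod_k D_k = 4q$ by Definition \ref{def:2}(ii), so $D\circ H = 4G_0\, q$. Since $G_0(x_0)>0$ and $q$ has a non-degenerate critical value zero at $x_0$, the product rule immediately gives $(D\circ H)(x_0)=0$, $\nabla(D\circ H)(x_0)=0$, and $\operatorname{Hess}(D\circ H)(x_0) = 4G_0(x_0)\operatorname{Hess}(q)(x_0)$ invertible.

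For the converse, suppose $D\circ H$ has a non-degenerate critical value zero at $x_0$. The main obstacle — and the step I expect to carry the weight of the proof — is to rule out all non-conical crossing types by a sharp order-of-vanishing estimate. The point is that $A_k(x)-\mu_k I_{m_k}$ is a smooth matrix-valued function vanishing at $x_0$, hence of size $O(|x-x_0|)$, and $D_k$ is a polynomial of degree $m_k(m_k-1)$ in those entries, so $D_k(x) = O(|x-x_0|^{m_k(m_k-1)})$. Consequently, if any $m_k\ge 3$ then $D_k=O(|x-x_0|^6)$, while if two clusters have $m_k=m_\ell=2$ then $D_kD_\ell=O(|x-x_0|^4)$; in both cases $D\circ H$ vanishes to order $\ge 4$ at $x_0$, forcing $\operatorname{Hess}(D\circ H)(x_0)=0$ and contradicting non-degeneracy. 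Hence Definition \ref{def:2}(i) holds, and the factorization $D\circ H=4G_0q$ of the forward direction applies. The Hessian identity, read in reverse, yields $\operatorname{Hess}(q)(x_0)$ non-degenerate; combined with $q\ge 0$ and $q(x_0)=0$, this is a non-degenerate critical value zero for $q$. Setting $\ell(x) = \tfrac12(\lambda_n(x)+\lambda_{n+1}(x))$ — smooth as the half-trace of $A_k$ for the distinguished cluster — recovers the identity \eqref{eq:1k} and hence Definition \ref{def:2}(ii).
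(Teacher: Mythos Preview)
Your proof is correct and follows essentially the same approach as the paper: factor $D\circ H$ as $(\lambda_n-\lambda_{n+1})^2$ times a factor that is positive at $x_0$, use an order-of-vanishing estimate to force the degeneracy at $x_0$ to be a single double eigenvalue, and recover smoothness of $\ell$ and $q$ from the Riesz projection onto the degenerate cluster. Your systematic use of Riesz projections onto \emph{all} spectral clusters makes your auxiliary factor $G_0$ smooth (the paper only invokes Lipschitz continuity of eigenvalues for its analogous factor $F$), so your Hessian identity $\operatorname{Hess}(D\circ H)(x_0)=4G_0(x_0)\operatorname{Hess}(q)(x_0)$ is cleaner than the paper's $(1+o(1))$ argument, but the strategy is identical.
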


\begin{rmk}\label{rem:2} No structure -- but that of a smooth manifold -- is required to define non-degenerate critical points of $u \in C^\infty(X,\R)$. A Riemannian structure on  $X$ allows us to consider the covariant Hessian $\nabla^2 u$; it is a symmetric endomorphism on $TX$ -- see e.g. \cite[\S2.1]{P06}. Non-degenerate critical points correspond to $du(x) = 0$ and $\nabla^2u(x)$ non-singular -- see e.g. \cite[\S5.12]{P06}.
\end{rmk}

\begin{proof} 1. We assume first that $H$ has a conical degeneracy at $x_0$. Let $\lambda_n\big( H(x_0) \big) = \lambda_{n+1}\big( H(x_0) \big)$ be the unique degenerate eigenvalue of $H(x_0)$. We write
\begin{equation}
D \circ H = \big( \lambda_{n+1}(H) - \lambda_n(H)\big)^2 \cdot F, \ \ \ \ F \de \prod_{\substack{j < k \\ (j,k) \neq (n,n+1)}} \big(\lambda_j(H)-\lambda_k(H)\big)^2.
\end{equation}
Using \eqref{eq:1k}, $D \circ H = q \cdot F$, where $q \in C^\infty(X,\R)$ has a non-degenerate critical value zero at $x_0$. From general theory, the eigenvalues of Hermitian matrices are Lipschitz in the entries -- see \cite[Proposition 6.2]{S10} -- hence a fortiori continuous. Thus $F$ is continuous. Moreover, since all eigenvalues of $H(x_0)$ are simple but $\lambda_n\big( H(x_0) \big) = \lambda_{n+1}\big( H(x_0) \big)$, $F(x_0) > 0$. We deduce that $D \circ H$ has a non-degenerate critical value zero at $x_0$.

2. Now we assume that $D \circ H$ has the non-degenerate critical value zero at $x_0$. Then there exists $\Omega$ neighborhood of $x_0$ such that 
\begin{equation}
x \in \Omega \setminus \{ x_0\} \ \ \Rightarrow \ \ D \circ H(x) \neq 0.
\end{equation}
In particular, for $x \in \Omega \setminus \{ x_0\}$, the eigenvalues $\lambda_j\big(H(x)\big)$ of $H(x)$ are simple -- hence smooth functions of $x$. 

3. Since $D \circ H(x_0) = 0$, $H(x_0)$ has at least one degenerate eigenvalue. Define
 \begin{equation}
 S = \Big\{ j \in [1,N-1] : \ \lambda_j\big( H(x_0) \big) = \lambda_{j+1}\big( H(x_0) \big) \Big\}.
 \end{equation}
Since eigenvalues of Hermitian matrices are Lipschitz functions of the entries, there exists $C > 0$ such that (after possibly shrinking $\Omega$):
\begin{equation}\label{eq:1j}
x\in \Omega, \ j \in S \ \ \Rightarrow \ \ 
\big|\lambda_j \big( H(x) \big) - \lambda_{j+1} \big( H(x) \big)\big| \leq C \big\| H(x)-H(x_0)\big\|.
\end{equation}
Let $J$ be the cardinal of $S$. From \eqref{eq:1h} and \eqref{eq:1j}, we deduce that for some $C' > 0$,
\begin{equation}
x\in \Omega \ \ \Rightarrow \ \  \big| D \circ H(x) \big| \leq C' \big\| H(x)-H(x_0)\big\|^{2J}.
\end{equation}
Since $H$ depends smoothly on $x$ and $D$ has a non-degenerate minimum at $x_0$, we deduce that $J \leq 1$. This implies that $H(x_0)$ has exactly $N-1$ distinct eigenvalues. Thus, if $n \in [1,N-1]$ is the unique integer such that $\lambda_n\big( H(x_0) \big) = \lambda_{n+1}\big( H(x_0)\big)$, then for $j \neq n, n+1$, $\lambda_j(H)$ are smooth in $\Omega$.

4. Let us fix a contour $\gamma \subset \C$ enclosing $\lambda_n\big(H(x_0)\big) = \lambda_{n+1}\big(H(x_0)\big)$ but no other eigenvalue of $H(x_0)$. After possibly shrinking $\Omega$, for $x \in \Omega$, $\gamma$ enclose $\lambda_n\big(H(x)\big)$ and $\lambda_{n+1}\big(H(x)\big)$ but no other eigenvalue of $H(x)$. Thus, 
\begin{equations}\label{eq:0a}
F_1(x) \de \Tr \left[ \int_\gamma z \big(z-H(x)\big)^{-1} \dfrac{dz}{2\pi i} \right] = \lambda_n\big(H(x)\big) + \lambda_{n+1}\big(H(x)\big) \ \ \  \text{and}
\\
F_2(x) \de \Tr \left[ \int_\gamma z^2 \big(z-H(x)\big)^{-1} \dfrac{dz}{2\pi i} \right] = \lambda_n\big(H(x)\big)^2 + \lambda_{n+1}\big(H(x)\big)^2 \ \ \ \ \ \ 
\end{equations}
are both smooth functions on $\Omega$. It follows that both 
\begin{equation}\label{eq:1m}
\ell \de \dfrac{\lambda_n(H) + \lambda_{n+1}(H)}{2} = \dfrac{F_1}{2} \ \ \ \text{and} \ \ \ q = \dfrac{\big(\lambda_{n+1}(H) - \lambda_n(H)\big)^2}{4} = \dfrac{2F_2-2F_1^2}{4}
\end{equation}
are smooth functions on $\Omega$.

5. The equation \eqref{eq:1m} imply that $\lambda_n(H) = \ell - \sqrt{q}$ and $\lambda_{n+1}(H) = \ell + \sqrt{q}$. Thus, it remains to show that $q$ has a non-degenerate critical point at $x_0$. Again, we write
\begin{equation}
D \circ H = q \cdot F, \ \ \ \ F \de \prod_{\substack{j < k \\ (j,k) \neq (n,n+1)}} \big(\lambda_j(H)-\lambda_k(H)\big)^2.
\end{equation}
We observe that $F$ is Lipschitz, with $F(0) \neq 0$. Hence, we have $D \circ H(x) = q(x) \big(1 + o(1)\big)$ near $x_0$; this implies
\begin{equation}
q(x) = D \circ H(x) \cdot \big( 1 + o(1) \big).
\end{equation}
Since $D \circ H(x)$ has a non-degenerate critical point at $x_0$, so does $q$. This completes the proof. \end{proof}

\subsection{$\Mm$ is open.}\label{sec:2.2} Here we prove that $\Mm$  -- defined in \eqref{eq:1o} -- is open in $\MM$. We fix a Riemannian structure on $X$ and consider Hessians of smooth functions on $X$ as symmetric endomorphisms of $TX$ -- see Remark \ref{rem:2}. Define $f : \MM \times X \rightarrow \R$ by
\begin{equation}\label{eq:0j}
f(A,x) \de \Det\big[ \big(\nabla^2 (D\circ A)\big)(x) \big]^2 + D \circ A(x).
\end{equation}

Fix $x \in X$ and $H \in \Mm$. If $H(x) \in \EE^*$, then $f(H,x) \geq D \circ H(x) > 0$. If $H(x) \notin \EE^*$, then $H(x)$ has a conical degeneracy at $x$. Because of Lemma \ref{lem:1a}, $D \circ H$ has a non-degenerate critical point at $x$, thus
\begin{equation}
f(H,x) \geq \Det\big[ \big(\nabla^2 (D\circ H)\big)(x) \big]^2 > 0.
\end{equation}
We deduce that $f(H,\cdot)$ is positive on $X$; since $X$ is compact, $\inf_{x \in X} f(H,x) > 0$.

Since $X$ is compact and $f(A,\cdot)$ depends only on the first two derivatives of $A$, there exists a constant $C$ depending only on $\| H \|_{C^2}$ such that
\begin{equation}\label{eq:0m}
\|B \|_{C^2} \leq 1 \ \ \Rightarrow \ \ 
\big|f(H+B,x) - f(H,x)\big| \leq C \|B\|_{C^2}.
\end{equation}
Since $\inf_{x \in X} f(H,x) > 0$, there exists $\epsi_0 > 0$ such that whenever $\| B \|_{C^2} \leq \epsi_0$, for every $x \in X$, $f(H+B,x) > 0$.

Hence, if $\| B \| \leq \epsi_0$ and $x \in X$, then either:
\begin{itemize}
\item $D \big( H(x) + B(x)\big) > 0$, that is $H(x) + B(x) \in \EE^*$;
\item or $D \big( H(x) + B(x)\big) > 0$ and $\Det\big[ \big(\nabla^2 D(H+B)\big)(x) \big]^2 >0$.
\end{itemize}
In the latter, $x$ is a non-degenerate critical point of $D(H+B)$. Thus $x$ is a conical degeneracy of $H+B$. This shows that $H+B \in \Mm$, hence $\Mm$ is open in $\MM$.

\section{$\Mm$ is dense}\label{sec:3}

In this section we show that $\Mm$ is dense in $\MM$. When $N = 2$, this follows from Sard's theorem and the fact that $D(A)$ is the sum of $3=\dim(X)$ squares depending smoothly on $A$; see \S\ref{sec:3.1}.

Two new problems arise for $N \geq 3$. Degeneracies can be more intricate: triple eigenvalues or pairs of double eigenvalues may arise. In \S\ref{sec:3.2}, we show that these are too rare to be significant in our problem. This will allow us to focus on $N \times N$ families of matrices with at least $N-1$ distinct eigenvalues.

The other obstacle is more serious: for $N \geq 3$, $D(A)$ is the sum of at least $5$ squares -- see \cite{D11}. Since $5 > \dim(X)$, the arguments of \S\ref{sec:3.1} do not naively extend. The key mechanism is that degeneracies of a $N \times N$ family $H \in \LL$ with at least $N-1$ distinct eigenvalues reduce \textit{locally} to those of a $2 \times 2$ family. This enables us to apply \textit{locally} the theory of \S\ref{sec:3.1}. The technical part in the proof of Theorem \ref{thm:2} consists of patching the local reductions -- see \S\ref{sec:3.3}-\ref{sec:3.4}.

\subsection{The case $N=2$}\label{sec:3.1} In this section only, we assume that $N=2$. This considerably simplifies that proof that $\Mm$ is dense -- and it will serve in the general situation. 

\begin{proof}[Proof that $\Mm$ is dense when $N=2$] When $N=2$, the Pauli matrices $\sigma_0 = \Id_2$, $\sigma_1$, $\sigma_2, \sigma_3$ form a basis of $\EE$. If $A = \sum_{j=0}^3 a_j \cdot \sigma_j$, then
\begin{equation}\label{eq:0w}
\sigma(A) = a_0 \pm |a|, \ \ \ \ D(A) = \| a \|^2, \ \ \ \text{where} \ \ \ a = [a_1,a_2,a_3]^\top.
\end{equation}

Let $H \in \LL$; we write $H(x) = \sum_{j=0}^3 h_j(x) \cdot \sigma_j$.  
Let $h = [h_1,h_2,h_3]^\top$ and 
\begin{equations}
\CC \de \big\{ t \in \R^3 : \ \exists x \in X, \ h(x) = t \text{ and } \rk \big(h'(x)\big) \leq 2 \big\} \\
=
\big\{ h(x) : \ x \in X, \ \rk \big(h'(x)\big) \leq 2 \big\}.
\end{equations}
According to Sard's theorem, the set $\R^3 \setminus \CC$ is dense in $\R^3$: given $\epsi > 0$, there exists $b \in \R^3 \setminus \CC$ with $\| b \| \leq \epsi$; see e.g.  \cite[\S1.7]{GP74}. Set $B = \sum_{j=1}^3 b_j \sigma_j$; we claim that all degeneracies of $H-B$ are conical. Indeed from \eqref{eq:0w}:
\begin{equation}
D\big(H(x) - B\big) = \big\| h(x) - b \big\|^2 = \sum_{j=1}^3 \big( h_j(x) -b_j \big)^2.
\end{equation}
From Lemma \ref{lem:1a}, $H-B$ can have a non-conical degeneracy at a point $x \in X$ only if $h(x) = b$ and $\rk \big(h'(x)\big) \leq 2$. This is always  excluded because $b \notin \CC$. Since $\epsi$ was arbitrary, we conclude that $\Mm$ is dense in $\MM$ when $N=2$. \end{proof}

\subsection{Removing high-multiplicity degeneracies}\label{sec:3.2} We go back to $N \neq 2$. In this section, we explain why we can focus our attention on family of matrices that always have at least $N-1$ distinct eigenvalues.

\begin{lem}\label{lem:1d} The set
\begin{equation}
\FF \de \big\{ A \in \EE : A \text{ has at most $N-2$ distinct eigenvalues} \big\}
\end{equation}
has Hausdorff dimension at most $N^2-6$.
\end{lem}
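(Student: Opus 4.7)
The plan is to stratify $\FF$ by the eigenvalue multiplicity pattern and bound the dimension of each stratum. For every partition $\mathbf{m} = (m_1, \ldots, m_k)$ of $N$ with $m_1 \geq \cdots \geq m_k \geq 1$, let $\FF_{\mathbf{m}} \subset \EE$ denote the set of Hermitian matrices whose distinct eigenvalues have multiplicities $m_1, \ldots, m_k$. The hypothesis that $A \in \FF$ has at most $N-2$ distinct eigenvalues forces $k \leq N-2$, so $\FF$ is the \emph{finite} union $\bigcup_{\mathbf{m}:\, k \leq N-2} \FF_{\mathbf{m}}$. Since Hausdorff dimension is stable under finite unions, it suffices to bound $\dim \FF_{\mathbf{m}}$ from above by $N^2-6$ for each such $\mathbf{m}$.

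Next, I would identify each $\FF_{\mathbf{m}}$ with a smooth submanifold of $\EE$ via the $U(N)$-conjugation action: $\FF_{\mathbf{m}}$ fibers over the open subset $\{\lambda_1 < \cdots < \lambda_k\} \subset \R^k$ of admissible eigenvalue tuples, with fiber diffeomorphic to the orbit $U(N)/\big(U(m_1) \times \cdots \times U(m_k)\big)$. Since the stabilizer has real dimension $\sum_i m_i^2$, adding the contributions of base and fiber yields
\begin{equation*}
\dim \FF_{\mathbf{m}} \;=\; k + N^2 - \sum_{i=1}^k m_i^2 \;=\; N^2 - \sum_{i=1}^k (m_i^2 - 1).
\end{equation*}

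The remaining step is a short combinatorial optimization. Indices with $m_i = 1$ contribute $0$ to $\sum (m_i^2 - 1)$; any $m_i = 2$ contributes $3$; and any $m_i \geq 3$ contributes at least $8$. The constraint $k \leq N-2$ is equivalent to $\sum_i (m_i - 1) \geq 2$, which forces either some index to satisfy $m_i \geq 3$ or at least two indices to satisfy $m_i = 2$. In either case $\sum_i (m_i^2 - 1) \geq 6$, with equality precisely for $\mathbf{m} = (2, 2, 1, \ldots, 1)$. Hence $\dim \FF_{\mathbf{m}} \leq N^2 - 6$ for every relevant $\mathbf{m}$, and the desired bound $\dim_\HH(\FF) \leq N^2 - 6$ follows.

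The only non-routine step is justifying the manifold structure of $\FF_{\mathbf{m}}$, which is classical: it reduces to the slice theorem for the smooth compact $U(N)$-action on $\EE$, together with smooth dependence of eigenprojectors on a normal slice. I expect no serious obstacle; both the geometric input and the combinatorial case-check are elementary.
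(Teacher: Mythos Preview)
Your argument is correct and in fact cleaner than the paper's. The paper does a coarser two-piece decomposition: $\FF_1$ (matrices with a triple eigenvalue) is handled via the rank stratification of $\EE$ after subtracting a scalar, while $\FF_2$ (matrices with two distinct double eigenvalues and no triple) is handled locally by a Schur-complement construction and the local submersion theorem, yielding a two-parameter family of $(N^2-8)$-dimensional submanifolds. Your approach instead stratifies by the full multiplicity pattern and invokes the classical dimension formula $\dim\FF_{\mathbf m}=N^2-\sum_i(m_i^2-1)$ for the $U(N)$-isospectral strata, then reduces to a one-line combinatorial check. What you gain is conceptual transparency and uniformity (the same mechanism treats every pattern at once, and the sharpness of the bound at $\mathbf m=(2,2,1,\dots,1)$ is visible); what the paper's route gains is being more self-contained, since it avoids citing the orbit-type stratification or the slice theorem. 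Either way, the only substantive analytic input in your version is the smoothness of $\FF_{\mathbf m}$, which is indeed classical (and referenced in the paper via \cite{A95,AS78}).
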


See \cite[\S2]{AS78} for related results -- but a different approach. 
Before giving the proof of Lemma \ref{lem:1d}, we discuss its consequences. We aim to prove that $\Mm$ is dense in $\MM$: given $H \in \MM$ and $\epsi > 0$, there exists $H_\epsi \in \Mm$ such that $d(H,H_\epsi) \leq 2\epsi$. Since $\dim X = 3$ and $\dim_\HH \FF \leq N^2-6$, the set 
\begin{equation}
\SSS = \big\{ H(x) - F : \ x \in X, \ F \in \FF\big\}
\end{equation}
has Hausdorff dimension at most $N^2-3$; thus $\EE \setminus \SSS$ has full measure. In particular, there exists $B \in \EE \setminus \SSS$ such that $\| B \| \leq \epsi$; and $H(x) + B \notin \FF$ for every $x \in X$. That is, $H+B \in C^\infty(X,\EE\setminus \FF)$.

Thus, to prove that $\Mm$ is dense in $\MM$, we just need to show that for every $H \in C^\infty(X,\EE\setminus\FF)$, there exists $H_\epsi \in \Mm$ with $d(H,H_\epsi) \leq \epsi$. 

\begin{proof}[Proof of Lemma \ref{lem:1d}] 1. We observe that $\FF= \FF_1 \cup \FF_2$, where 
\begin{equation}
\FF_1 = \big\{A \in \EE : \ \text{$A$ has a triple eigenvalue} \big\}, \ \ \ \ 
\FF_2 = \FF \setminus \FF_1.
\end{equation}
Therefore, it suffices to show that $\FF_1$ and $\FF_2$ have Hausdorff dimension at most $N^2-6$.

2. We observe that $\FF_1 = \Phi(\GG_1,\R)$, where  $\GG_1$ consists of Hermitian $N \times N$ matrices of rank at most $N-3$; and $\Phi(B,\lambda) = B+\lambda$. We write
\begin{equation}\label{eq:0c}
\GG_1 = \bigcup_{j=0}^{N-3} \big\{ B \in \EE : \ \rk(B) = j\big\};
\end{equation}
and we recall that the sets in the RHS of \eqref{eq:0c} are smooth submanifolds of $\EE$, of dimension $N^2-(N-j)^2$ -- see e.g. \cite[\S1.4]{GP74}. Therefore, $\GG_1$ is a finite union of manifolds of dimensions up to $N^2-9$. We deduce that $\dim_{\HH}\GG_1 = N^2-9$ and $\dim_\HH \FF_1 = N^2-8$.

3. The set $\FF_2$ consists of matrices that have two distinct eigenvalues of multiplicity two but no triple eigenvalues.
We show that it has Hausdorff dimension at most $N^2-6$.  For $A_0 \in \FF_2$, there exists a unitary $N \times N$ matrix $U$ such that
\begin{equation}
U^* A_0  U=  \matrice{\lambda_1 \Id_2 & 0 & 0 \\ 0 & \Lambda & 0 \\ 0 & 0 & \lambda_2 \Id_2},
\end{equation}
where $\lambda_1 \neq \lambda_2$ and $\Lambda$ is a diagonal matrix of size $N-4$, with no diagonal coefficients equal to $\lambda_1$ or $\lambda_2$. In particular, both 
\begin{equation}
\matrice{\Lambda & 0 \\ 0 & \lambda_2 \Id_2} - \lambda_1 \ \ \text{and} \ \ \matrice{\lambda_1 \Id_2 & 0 \\ 0 & \Lambda } - \lambda_2
\end{equation}
are invertible $(N-2) \times (N-2)$ matrices. Therefore, there exists a neighborhood $\Omega \subset \EE$ of $A_0$ such that  for any $C \in \Omega$, we can write
\begin{equation}
 U^* A U = \matrice{C_1 & C_2 \\ C_2^* & C_3} = \matrice{D_3 & D_2 \\ D_2^* & D_1} ,
\end{equation}
where $C_3 - \lambda_1$ and $D_3-\lambda_2$ are $(N-2) \times (N-2)$ invertible matrices. 

4. If $R_1, R_2, R_3$ are consistently-sized matrices, with $R_1$ invertible,
\begin{equation}
\rk \left(\matrice{R_1 & R_2 \\ R_2^* & R_3} \right) = \rk(R_1)+ \rk \big(R_3-R_2^* R_1^{-1}R_2\big).
\end{equation}
This can be seen for instance from Schur's complement formula:
\begin{equation}\label{eq:0t}
\matrice{R_1 & R_2 \\ R_2^* & R_3} \matrice{\Id_{N-2} & -R_1^{-1}R_2 \\ 0 & \Id_2 } = \matrice{R_1 & 0 \\ R_2^* & R_3-R_2^* R_1^{-1}R_2}.
\end{equation}

Let $\Omega_1, \Omega_2 \subset \R$ be sufficiently small disjoint neighborhood of $\lambda_1, \lambda_2$ such that if $\mu_1 \in \Omega_1$ and $\mu_2 \in \Omega_2$,
\begin{equation}
\Phi_{\mu_1,\mu_2}(A) = \left( C_1 - \mu_1 - C_2^* (C_3-\mu_1)^{-1} C_2, \ D_1 - \mu_2 - D_2^* (D_3-\mu_2)^{-1} D_2 \right),
\end{equation}
from $\Omega$ to pairs of $2 \times 2$ Hermitian matrices, is well-defined. By \eqref{eq:0t}, $\Phi_{\mu_1,\mu_2}(A) = (0,0)$ if and only if $A-\mu_1$ and $A - \mu_2$ are of rank $N-2$; equivalently, if and only if $\mu_1$ and $\mu_2$ are two double eigenvalues of $A$. 

5. The map $\Phi_{\mu_1,\mu_2}$ is a local submersion at $A_0$. Indeed, we have
\begin{equation}
d\Phi_{\mu_1,\mu_2}(A_0) \cdot U \matrice{\epsilon_1 & 0 & 0 \\ 0 & 0 & 0 \\ 0 & 0 & \epsilon_2 }U^*  = (\epsilon_1,\epsilon_2).
\end{equation}
We note that $\Phi_{\mu_1,\mu_2}$ has range in  pairs of $2 \times 2$ Hermitian matrices, which has dimension $8$. Thus, by the local submersion theorem \cite[\S4]{GP74}, $\Phi_{\mu_1,\mu_2}^{-1}(0,0)$ is a submanifold of $\EE$ of dimension $N^2-8$. 

Using continuity of eienvalues, after potentially shrinking $\Omega$, we have
\begin{equation}
\FF_2 \cap \Omega = \bigcup_{(\mu_1,\mu_2) \in \Omega_1 \times \Omega_2} \Phi_{\mu_1,\mu_2}^{-1}(0,0).
\end{equation}
Since $\Omega_1 \times \Omega_2$ has dimension $2$, $\dim_\HH(\FF_2 \cap \Omega) \leq N^2-6$. Since $\Omega \subset \EE$ is a neighborhood of an arbitrary element $A_0 \in \FF_2$, $\FF_2$ is a countable union of sets of dimension at most $N^2-6$, thus it has dimension at most $N^2-6$.
\end{proof}

\subsection{Removing bad points: preparatory lemmas}\label{sec:3.3} Because of \S\ref{sec:3.2}, we focus (without loss of generalities) on $H \in C^\infty(X, \EE\setminus \FF)$: $H$ has, at all points of $X$, at least $N-1$ distinct eigenvalues. We  will show in \S\ref{sec:3.4} that $H$ is arbitrarily close to $\Mm$.

A naive generalization of \S\ref{sec:3.1} to $N \geq 3$ requires to write $D(A)$ as a sum of three squares depending smoothly on $A \in \EE$ -- see \eqref{eq:0w}. This is not possible: according to \cite{D11},` at least $5$ squares are necessary; see also \cite{I92,L98,P02,D11}. In \S\ref{sec:3.4}, we will get around by writing $D(H)$ \textit{locally} -- instead of \textit{globally} -- as a sum of $3$ squares. The present section lays out preparatory lemmas.

Fix $x_\star \in X$. According to the assumption, there exists $n_\star \in [1,N-1]$ such that
\begin{equation}
\lambda_1\big( H(x_\star) \big) < \dots < \lambda_{n_\star}\big( H(x_\star) \big) \leq \lambda_{n_\star+1}\big( H(x_\star) \big) < \dots < \lambda_N\big( H(x_\star) \big).
\end{equation}
Since eigenvalues are continuous functions of the entries, there exists an open neighborhood $X_\star \subset X$  of $x_\star$ such that
\begin{equation}\label{eq:2c}
x \in X_\star \ \ \Rightarrow \ \ 
\lambda_1\big( H(x) \big) < \dots < \lambda_{n_\star}\big( H(x) \big) \leq \lambda_{n_\star+1}\big( H(x) \big) < \dots < \lambda_N\big( H(x) \big).
\end{equation}
After potentially shrinking $X_\star$, there exists a ball $B(0,2r_\star) \subset \R^3$, and a smooth diffeomorphism $\phi_\star : B(0,2r_\star) \rightarrow X_\star$ with $\phi_\star(0) = x_\star$. We set $Y_\star = \phi\big(B(0,r_\star)\big) \subset X_\star$.

We observe that $x_\star \in Y_\star$. Thus, the collection of open sets $\{ Y_\star \}_{x_\star \in X}$ covers $X$ and we can pass to a finite collection, associated to points $x_1, \dots, x_P$. 

\begin{lem}\label{lem:1b} There exists $\delta_0 \in (0,1)$ such that for all $B \in \MM$ with $\| H-B\|_{C^0} \leq \delta_0$, for every $p \in [1,P]$,
\begin{equation}
x \in X_p \ \ \Rightarrow \ \ 
\lambda_1\big( B(x) \big) < \dots < \lambda_{n_p}\big( B(x) \big) \leq \lambda_{n_p+1}\big( B(x) \big) < \dots < \lambda_N\big( B(x) \big).
\end{equation}
\end{lem}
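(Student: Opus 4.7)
The plan is to combine the Lipschitz dependence of Hermitian eigenvalues on matrix entries with a compactness argument applied to each piece of the finite cover $\{Y_p\}_{p=1}^P$. The crucial input, already invoked in the proof of Lemma \ref{lem:1a} via \cite[Proposition 6.2]{S10}, is the Weyl-type estimate $\big|\lambda_j(A) - \lambda_j(B)\big| \leq \|A-B\|$, valid for all $A, B \in \EE$ and $j \in [1,N]$.

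First I would observe that for each $p \in [1,P]$, the set $\overline{Y_p} = \phi_p\big(\overline{\Bb(0,r_p)}\big)$ is a compact subset of the open neighborhood $X_p = \phi_p\big(\Bb(0,2r_p)\big)$, and that on $X_p$ the eigenvalues of $H$ satisfy the strict ordering \eqref{eq:2c}. Consequently, the continuous function
$$x \longmapsto \min_{j \in [1,N-1]\setminus\{n_p\}} \big(\lambda_{j+1}(H(x)) - \lambda_j(H(x))\big)$$
is strictly positive on $\overline{Y_p}$ and, by compactness, attains a positive minimum $\gamma_p > 0$. I would then set $\delta_0 = \tfrac{1}{3}\min_{1 \leq p \leq P} \gamma_p$. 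For any $B \in \MM$ with $\|H-B\|_{C^0} \leq \delta_0$, any $x \in \overline{Y_p}$, and any $j \in [1,N-1]\setminus\{n_p\}$, the Weyl bound yields
$$\lambda_{j+1}(B(x)) - \lambda_j(B(x)) \geq \lambda_{j+1}(H(x)) - \lambda_j(H(x)) - 2\delta_0 \geq \gamma_p - 2\delta_0 > 0,$$
while $\lambda_{n_p}(B(x)) \leq \lambda_{n_p+1}(B(x))$ holds tautologically. Since $\{Y_p\}_{p=1}^P$ covers $X$, this yields the desired conclusion, with $X_p$ understood as refined to the compact covering piece $\overline{Y_p}$.

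The argument is essentially routine; its substance is simply that pointwise positivity of the eigenvalue gap can be upgraded to a uniform lower bound on compact sets. The one point that requires some care is the distinction between the open neighborhood $X_p$ and its compact refinement $Y_p$: no uniform positive gap of $H$ is available on $X_p$ itself, since sequences approaching $\partial X_p$ could in principle drive the gap to zero. Compactness of the refined cover, set up precisely for this reason just above the lemma statement, is therefore what makes the perturbation estimate uniform in $x$ and $p$.
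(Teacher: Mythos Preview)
Your approach—Lipschitz continuity of eigenvalues plus compactness—is exactly what the paper intends; indeed, the paper's entire proof is the single remark that the lemma is ``a direct consequence of \eqref{eq:2c} with continuity of eigenvalues in the entries of the matrix.''

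That said, your final caveat leaves a gap. You establish the ordering only on $\overline{Y_p}$, whereas the lemma is stated and, crucially, \emph{applied} on the larger set $X_p$: in Step~2 of \S\ref{sec:3.4} the rank-two bundle $\VV$ and the unitary frame $U$ are constructed over all of $X_p$, and in Step~3 the cutoff $\chi$ has support in $X_p$, not merely in $Y_p$. Your weakened conclusion would therefore force a rearrangement of that argument. The clean fix is upstream in the construction: when the paper writes ``after potentially shrinking $X_\star$,'' read this as choosing the coordinate ball so that $\phi_\star$ extends to a slightly larger ball whose image still lies in the original open neighborhood on which \eqref{eq:2c} holds (this is always possible). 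Then $\overline{X_p}$ is itself compact and contained in that neighborhood, and your identical argument—run now on $\overline{X_p}$ rather than on $\overline{Y_p}$—yields a uniform positive gap there and hence the lemma exactly as stated. Your instinct that compactness is needed to pass from pointwise to uniform gaps is entirely correct; the compact set simply sits one layer out from where you placed it.
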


\begin{figure}
\floatbox[{\capbeside\thisfloatsetup{capbesideposition={right,top},capbesidewidth=3in}}]{figure}[\FBwidth]
{\caption{The proof that \vspace{1cm} $\Mm$ is dense goes as follows. \hspace{5cm}
(a) We first cover $X$ by topologically trivial open sets (here $Y_1, Y_2, Y_3$) on which the degeneracies of $H$ reduce  to those of a   $2 \times 2$ \vspace{1.4cm} system.  \hspace{5cm}
(b) In $Y_1$, the degenerate part of $H$ reduces to that of a $2 \times 2$ system. Via the procedure of \S\ref{sec:3.1}, we can produce $H_1$, arbitrarily close to $H$, with no bad points in $Y_1$. By Lemma \ref{lem:1c}, $\BB(H_1)$ is  a  small   perturbation of $\BB(H)\vspace{1.4cm} \setminus Y_1$. \hspace{5cm} 
(c) We repeat the procedure and produce $H_2$, arbitrarily close to $H_1$, with no bad points in $Y_2$. As bad points are stable, $\BB(H_2)$ is close to $\BB(H_1)$. In particular passing from $H_1$ to $H_2$ does not generate bad points back in  $Y_1 \setminus (Y_2 \cup Y_3)$, and \vspace{1.4cm} removes bad points in $Y_2$. \hspace{8cm} 
(d) We get new  systems $H_1, H_2, H_3$, recursively constructed, arbitrarily close to $H$, with no bad points in $Y_1, Y_1 \cup Y_2 \setminus Y_3, Y_1 \cup Y_2 \cup Y_3$, respectively. Since $Y_1 \cup Y_2 \cup Y_3$ cover $X$, $H_3$ is in $\Mm$ and is arbitrarily close to \vspace{1mm}  $H$. 
}\label{fig:4}}
{\begin{tikzpicture}
\definecolor{vert}{rgb}{.2, .5, .2}
\definecolor{purp}{rgb}{.7, .4, .07}

  \node at (-3,0) {\includegraphics[height=1.8in]{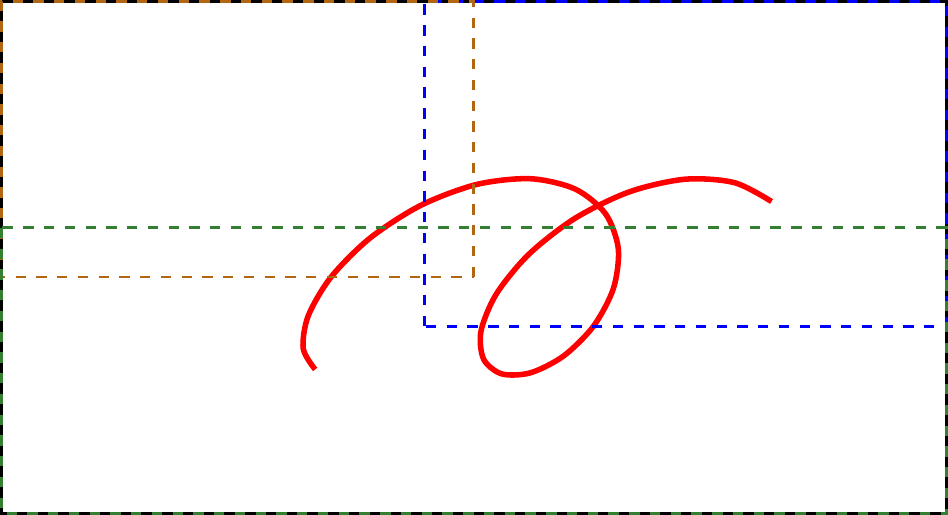}};
  \node at (-3,-5.5*1.071) {\includegraphics[height=1.8in]{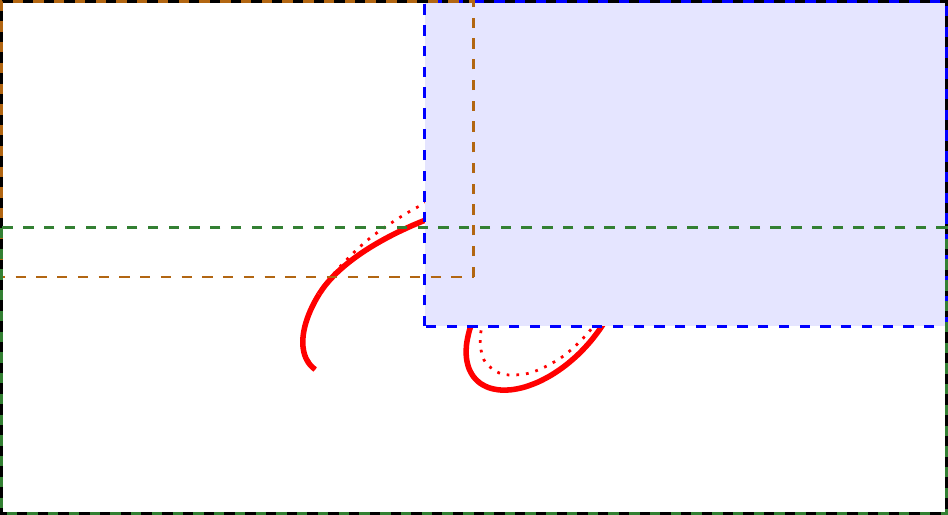}};
  \node at (-3,-11*1.071) {\includegraphics[height=1.8in]{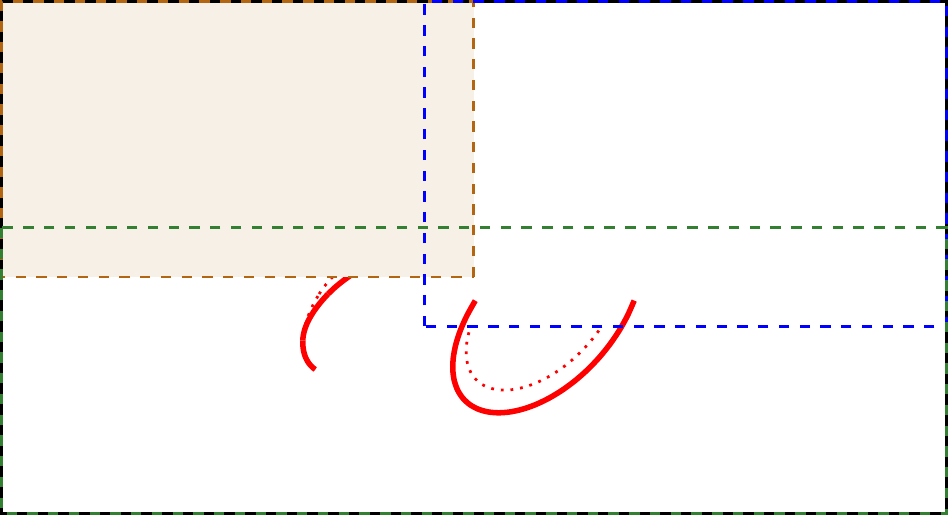}};
  \node at (-3,-16.5*1.071) {\includegraphics[height=1.8in]{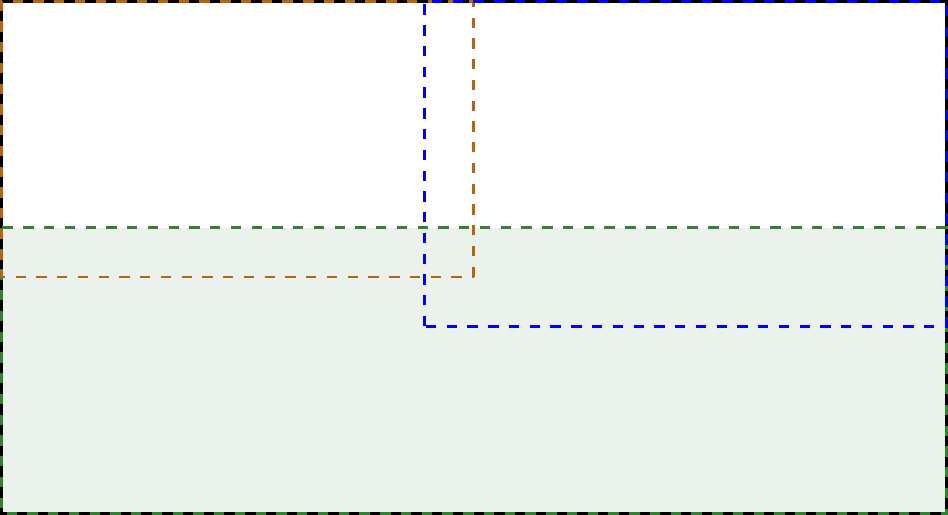}};
   
\node[blue] at (.8,1.9) {$Y_1$};   
\node[purp] at (-6.8,1.9) {$Y_2$};   
\node[vert] at (-6.8,-1.9) {$Y_3$};   
\node at (0.8,-1.9) {$X$};   
\node[red] at (-5.1,-1.1) {$\BB(H)$};

\node[blue] at (.8,1.9-5.5*1.071) {$Y_1$};   
\node[purp] at (-6.8,1.9-5.5*1.071) {$Y_2$};   
\node[vert] at (-6.8,-1.9-5.5*1.071) {$Y_3$};   
\node at (0.8,-1.9-5.5*1.071) {$X$};

\node[red] at (-5.15,-1.1-5.5*1.071) {$\BB(H_1)$};   

\node[blue] at (.8,1.9-11*1.071) {$Y_1$};   
\node[purp] at (-6.8,1.9-11*1.071) {$Y_2$};   
\node[vert] at (-6.8,-1.9-11*1.071) {$Y_3$};   
\node at (0.8,-1.9-11*1.071) {$X$};   
\node[red] at (-5.15,-1.1-11*1.071) {$\BB(H_2)$};

 \node[blue] at (.8,1.9-16.5*1.071) {$Y_1$};   
\node[purp] at (-6.8,1.9-16.5*1.071) {$Y_2$};   
\node[vert] at (-6.8,-1.9-16.5*1.071) {$Y_3$};   
\node at (0.8,-1.9-16.5*1.071) {$X$};     
  \node[red] at (-5.15,-1.1-16.5*1.071) {$\BB(H_3) = \emptyset$};

   \end{tikzpicture}}
\end{figure}

This result is a direct consequence of \eqref{eq:2c} with continuitiy of eigenvalues in the entries of the matrix -- \cite[Proposition 6.2]{S10}. 

Given $A \in \MM$, we say that $x \in X$ is a bad point of $A$ if $A$ has a non-conical degeneracy at $x$. We let $\BB(A)$ be the set of bad points of $A$; in particular, $A \in \Mm$ if and only if $\BB(A) = \emptyset$. Bad points are stable:

\begin{lem}\label{lem:1c} Let $A \in \MM$ and $Z \subset X$ be an open set such that $\BB(A) \subset Z$. Then there exists $\eta_0 \in (0,1)$ such that for all $B \in \MM$ with $\| B \|_{C^2} \leq \eta_0$, $\BB(A+B) \subset Z$.
\end{lem}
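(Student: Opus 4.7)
The plan is to mimic the openness argument of \S\ref{sec:2.2}, but localized to the compact set $K \de X \setminus Z$ (compact because $X$ is compact and $Z$ is open). The key tool is the auxiliary function
\begin{equation*}
f(A,x) \de \Det\big[ \big(\nabla^2 (D\circ A)\big)(x) \big]^2 + D \circ A(x)
\end{equation*}
from \eqref{eq:0j}, which already encodes the absence of bad points and is continuous in $A$ in the $C^2$ topology, with a modulus of continuity depending only on $\| A \|_{C^2}$ (see \eqref{eq:0m}).

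First I would verify strict positivity of $f(A,\cdot)$ on $K$. Since $\BB(A) \subset Z$, any $x \in K$ either satisfies $A(x) \in \EE^*$, in which case $D\circ A(x) > 0$; or $A$ has a conical degeneracy at $x$, in which case Lemma \ref{lem:1a} makes $x$ a non-degenerate critical point of $D \circ A$, whence $\Det\big[ \big(\nabla^2 (D\circ A)\big)(x) \big]^2 > 0$. In both situations $f(A,x) > 0$. Continuity of $f(A,\cdot)$ and compactness of $K$ then yield $m \de \inf_{x \in K} f(A,x) > 0$.

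Next, I would invoke the estimate \eqref{eq:0m} with $A$ in place of $H$: for $\| B \|_{C^2} \leq 1$, $|f(A+B,x) - f(A,x)| \leq C \| B \|_{C^2}$ uniformly in $x \in X$, with $C = C(\| A \|_{C^2})$. Choosing $\eta_0 = \min\big(1, m/(2C)\big)$, the hypothesis $\| B \|_{C^2} \leq \eta_0$ forces $f(A+B,x) \geq m/2 > 0$ for every $x \in K$. Finally, exactly as in \S\ref{sec:2.2}, positivity of $f(A+B,x)$ combined with the pointwise non-negativity $D \geq 0$ splits into two cases: either $D \circ (A+B)(x) > 0$, and $x$ is not a degeneracy of $A+B$; or $D \circ (A+B)(x) = 0$, and then $x$ is automatically a critical point of $D \circ (A+B)$ (since zeros of a non-negative function are minima), non-degenerate thanks to the Hessian term in $f$. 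Lemma \ref{lem:1a} then identifies $x$ as a conical degeneracy of $A+B$. In either case $x \notin \BB(A+B)$, so $\BB(A+B) \cap K = \emptyset$, i.e.~$\BB(A+B) \subset Z$.

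I do not expect a serious obstacle here: the argument is a word-for-word transcription of the openness proof, with the global hypothesis $\BB(A) = \emptyset$ relaxed to $\BB(A) \subset Z$ and the global set $X$ replaced by the compact piece $K$. The only subtlety worth stating explicitly is the automatic vanishing of $d(D \circ (A+B))$ at each zero of $D \circ (A+B)$, which relies on $D \geq 0$ and is what turns positivity of the Hessian determinant into non-degeneracy of the critical point.
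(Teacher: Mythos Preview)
Your proposal is correct and follows essentially the same argument as the paper's proof: both use the auxiliary function $f$ from \eqref{eq:0j}, its strict positivity on the compact set $X \setminus Z$, and the $C^2$-continuity estimate \eqref{eq:0m} to conclude that $f(A+B,\cdot) > 0$ on $X \setminus Z$ for small $B$, which is equivalent to $\BB(A+B) \subset Z$. The paper compresses the case analysis into the single observation that $f(A+B,x) = 0$ if and only if $x \in \BB(A+B)$, but the content is identical to what you wrote.
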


\begin{proof} Recall \eqref{eq:0j} and \eqref{eq:0m}: there exists $C > 0$ (depending on $\|A\|_{C^2}$) such that 
\begin{equations}\label{eq:1q}
\big\|B \big\|_{C^2} \leq 1 \ \ \Rightarrow \ \ 
\big|f(A+B,x) - f(A,x)\big| \leq C \|B\|_{C^2},
\ \ \ \ \ \text{where} 
\\
f(A,x) \de \Det\big[ \big(\nabla^2 (D\circ A)\big)(x) \big]^2 + D \circ A(x).
\end{equations}
Moreover, $f(A+B,x) = 0$ if and only if $x \in \BB(A+B)$. 

On the compact set $X \setminus Z$, $f(A,\cdot) > 0$. From \eqref{eq:1q}, if $\| B \|_{C^2}$ is sufficiently small, $f(A+B,\cdot) > 0$ on $X \setminus Z$. Thus $\BB(A+B) \subset Z$. This completes the proof.
\end{proof}

\subsection{Proof of Theorem \ref{thm:2}}\label{sec:3.4}  We refer to Figure \ref{fig:4} for a step-by-step pictorial explanation of the proof.

\begin{proof}[Proof that $\Mm$ is dense in $\MM$] 1. As explained in \S\ref{sec:3.2}, to prove density of $\Mm$ in $\MM$, it suffices to prove density of $\Mm$ in $C^\infty(X,\EE \setminus \FF)$. Let $H \in C^\infty(X,\EE \setminus \FF)$.
 Fix $0 < \epsi < \delta_0/4$, where $\delta_0$ is given by Lemma \ref{lem:1b}.  For each $p \in [0,P]$, we construct recursively $H_p \in \MM$ such that
\begin{equation}
d(H, H_p) \leq \big(1-2^{-p}\big) \epsi ; \ \ \text{ and  } \ \ \BB(H_p) \subset Z_p, \ \ \ Z_p \de Y_{p+1} \cup \dots \cup Y_P.
\end{equation}
In particular, $H_P$ will satisfy $d(H,H_P) \leq \epsi$ and $\BB(H_P)  = \emptyset$. 

For $p=0$, we simply take $H_0 = H$. For $p \geq 1$, we proceed by induction: we assume that $H_{p-1}$ is constructed and we want to construct $H_p$. 

2. For $x \in X_p$, let $\VV(x)$ be the eigenspace of $H_{p-1}(x)$ associated to the eigenvalues $\lambda_{n_p}\big( H_{p-1}(x) \big)$ and  $\lambda_{n_p+1}\big( H_{p-1}(x) \big)$. Since $d(H, H_{p-1}) \leq \epsi$, $\| H-H_{p-1}\|_{C^0} \leq \delta_0$. Thus Lemma \ref{lem:1b} implies that for every $x \in X_p$,
\begin{equation}\label{eq:1v}
\lambda_1\big( H_{p-1}(x) \big) < \dots < \lambda_{n_p}\big( H_{p-1}(x) \big) \leq \lambda_{n_p+1}\big( H_{p-1}(x) \big) < \dots < \lambda_N\big( H_{p-1}(x) \big).
\end{equation}
Because of \eqref{eq:1v}, $\VV(x)$ induces a rank-two vector bundle over $X_p$; and so does $\VV(x)^\perp$. Since $X_p$ is diffeomorphic to a ball in $\R^3$, $\VV$ and $\VV^\perp$ are trivial vector bundles -- see e.g. \cite[\S1.3]{M01}. Therefore, they both admit unitary frames. This means that there exists $U \in C^\infty\big(X_p,U(N)\big)$ such that for all $x \in X_p$,
\begin{equation}\label{eq:1u}
H_{p-1}(x)  = U(x) \matrice{J(x) & 0 \\ 0 & J(x)^\perp} U(x)^*, \ \ \ \ \text{where:}
\end{equation}
\begin{itemize}
\item $J(x)$ is a $2 \times 2$ Hermitian matrix depending smoothly on $x \in X_p$, with eigenvalues $\lambda_{n_p}\big( H_{p-1}(x) \big)$ and  $\lambda_{n_p+1}\big( H_{p-1}(x) \big)$;
\item $J(x)^\perp$ is a $(N-2) \times (N-2)$ Hermitian matrix depending smoothly on $x \in X_p$, with simple eigenvalues $\lambda_j\big( H_{p-1}(x) \big), \ j \notin\{ n_p, n_p+1\}$. 
\end{itemize}

3. Let $\chi \in C^\infty(X,\R)$ be equal to $1$ on a neighborhood of $Y_p$, with support contained in $X_p$. Let $B$ be a Hermitian $2 \times 2$ matrix and define
\begin{equation}\label{eq:1r}
H_p(x)  \de  H_{p-1}(x) + \chi(x)^2 \cdot U(x) \matrice{ B & 0 \\ 0 & 0}U(x)^*.
\end{equation}
We note that
$H_p \in \MM$: $\chi = 0$ when $U$ is not well-defined. As $C^k(X,\R)$ is an algebra, 
\begin{equation}
\| H_p-H_{p-1} \|_{C^k} \leq \az_k \|B\|, \ \ \ \ \az_k \de C_k \| \chi U \|_{C^k} \| \chi U^* \|_{C^k}.
\end{equation}
Using that $s \mapsto s(1+s)^{-1}$ increases on $[0,\infty)$,
\begin{equation}\label{eq:2d}
d(H_p,H_{p-1}) = \sum_{k=0}^\infty 2^{-k} \dfrac{\| H_p-H_{p-1} \|_{C^k} }{1+\| H_p - H_{p-1}\|_{C^k}} \leq \sum_{k=0}^\infty 2^{-k} \dfrac{\az_k \|B\| }{1+\az_k \|B\|}.
\end{equation}
We split the sum in the RHS two parts, depending whether $\az_k$ is  larger than $\|B\|^{-1/2}$. Since $s(1+s)^{-1} \leq \min(1,s)$, we deduce that
\begin{equation}
\sum_{\az_k \leq \|B\|^{-1/2} } 2^{-k} \dfrac{\az_k \|B\| }{1+\az_k \|B\|} \leq 2\|B\|^{1/2}, \ \ \ \ \sum_{\az_k > \|B\|^{-1/2} } 2^{-k} \dfrac{\az_k \|B\| }{1+\az_k \|B\|} \leq 2^{-k_B+1},
\end{equation}
where $k_B$ is the smallest integer such that $\az_k > \| B \|^{-1/2}$ (with $k_B = \infty$ no such integer exist). In particular, $k_B \rightarrow \infty$ as $\| B \| \rightarrow 0$. Going back to \eqref{eq:2d}, we deduce that
\begin{equation}\label{eq:2e}
d(H_p,H_{p-1}) \leq 2 \left(\| B \|^{1/2} + 2^{-k_B}\right) \rightarrow 0 \ \ \ \text{ as } \ \ \ \| B \| \rightarrow 0.
\end{equation}

4. Let $\eta_0$ associated to $H_{p-1}$ and $Z_{p-1}$  by Lemma \ref{lem:1c}. 
Thanks to \S\ref{sec:3.1} and \eqref{eq:2e} can find a Hermitian $2 \times 2$ matrix $B$ with the two following conditions:
\begin{itemize}
\item All degeneracies of $J(x) + B$ in $X_p$ are conical;
\item $d(H_p,H_{p-1}) \leq \min\left(2^{-p} \epsi,\eta_0/8\right)$.
\end{itemize}

The recursion assumption $d(H,H_{p-1}) \leq (1-2^{-p-1}) \epsi$ and $d(H_p,H_{p-1}) \leq 2^{-p} \epsi$ yield  $d(H, H_p) \leq \big(1-2^{-p}\big) \epsi$. Moreover, $d(H_p,H_{p-1}) \leq \eta_0/8$ implies $\| H_p - H_{p-1} \|_{C^2} \leq \eta_0$. From Lemma \ref{lem:1c} and the recursion assumption $\BB(H_{p-1}) \subset Z_{p-1}$, $\BB(H_p) \subset Z_{p-1}$. 

4. To complete the recursion, it remains to show that $\BB(H_p)  \subset Z_p$; equivalently, that $H_p$ has no bad degeneracies in $Y_p$. When $\chi(x) = 1$ (i.e. on a neighborhood of $Y_p$), 
\begin{equation}\label{eq:1t}
H_p(x) = U(x) \matrice{J(x) + B & 0 \\ 0 & J(x)^\perp}U(x)^*.
\end{equation}
Using \eqref{eq:1u}, the identity \eqref{eq:1t} implies that when $\chi(x)=1$, the eigenvalues of $H_p(x)$ are: $\lambda_j\big( H_{p-1}(x) \big)$ for $j \neq n_p, n_p+1$; and $\lambda_j\big( J(x) +  B \big)$, $j =1,2$. 

From \eqref{eq:1v}, the only possible degeneracies of $H_p$ in $\{ \chi = 1 \}$ arise from  $\lambda_1\big( J + B \big)$ and $\lambda_2\big( J +  B \big)$. By definition of $B$, all such degeneracies are conical. Since $Y_p \subset \{ \chi=1\}$, we get $\BB(H_p) \cap Y_p = \emptyset$. This completes the recursion and the proof of Theorem \ref{thm:2}.
\end{proof}

\section{Proof of Theorem \ref{thm:1}}\label{sec:4}

\begin{proof}[Proof that $\Ll$ is open in $\LL$] The proof is similar to \S\ref{sec:2}. Fix $H \in \Ll$; let $\{ \zeta_1, \dots, \zeta_J\}$ be the (finite) set of points of $[0,1] \times \Tt^2$, such that $\lambda_n(H)$ and $\lambda_{n+1}(H)$ degenerate. 

For each $j \in [1,J]$, let $\gamma_j$ be a contour enclosing $\lambda_n\big(H(\zeta_j)\big) = \lambda_{n+1}\big(H(\zeta_j)\big)$, but no other eigenvalue of $H(\zeta_j)$. Using continuity of eigenvalues, there exist $\epsi_0$ and $r_0 > 0$ such that for $B \in \MM$ with $\| B \|_{C^2} \leq \epsi_0$ and $\zeta \in \Bb(\zeta_j,r_0)$,  $\gamma_j$ encloses
$\lambda_n\big(H(\zeta) + B(\zeta)\big)$ and  $\lambda_{n+1}\big(H(\zeta) + B(\zeta)\big)$ but no other eigenvalues of $H(\zeta) + B(\zeta)$.

Without loss of generality, the balls $\Bb(\zeta_j,r_0)$ are disjoints. For $\zeta \in \Bb(\zeta_j,r_0)$, introduce, similarly to \eqref{eq:0a},
\begin{equations}
G_1(\zeta,B) \de \Tr \left[ \int_{\gamma_j} z \big(z-H(\zeta)-B(\zeta)\big)^{-1} \dfrac{dz}{2\pi i} \right] = \sum_{j = n}^{n+1}\lambda_j\big(H(\zeta)+B(\zeta)\big),
\\
G_2(\zeta,B) \de \Tr \left[ \int_{\gamma_j} z^2 \big(z-H(\zeta)-B(\zeta)\big)^{-1} \dfrac{dz}{2\pi i} \right] = \sum_{j = n}^{n+1} \lambda_j\big(H(\zeta)+B(\zeta)\big)^2,
\\
G(\zeta,B) \de 2G_2(\zeta,B) - G_1(\zeta,B)^2 = \Big(\lambda_{n+1}\big(H(\zeta)+B(\zeta)\big) - \lambda_n\big(H(\zeta)+B(\zeta)\big)\Big)^2.
\end{equations}
We note that $G(\zeta_j,0) = 0$ hence $\nabla^2_\zeta G(\zeta_j,0) > 0$, because $\lambda_n(H)$ and $\lambda_{n+1}(H)$ may only degenerate conically. The identity $G = 2G_2-G_1^2$ and the Cauchy representation of $G_1$ and $G_2$ imply that for some $C > 0$ and all $\zeta \in \Omega$,
\begin{equation}
\big| \nabla^2_\zeta G(\zeta,B) - \nabla^2_\zeta G(\zeta,0) \big| \leq C \| B \|_{C^2}.
\end{equation}
Therefore, after possibly shrinking $\epsi_0$ and $\Omega$, 
\begin{equation}
\| B \|_{C^2} \leq \epsi_0, \ \zeta \in \Omega \ \ \Rightarrow \ \ \nabla^2_\zeta G(\zeta,B) > 0.
\end{equation} 
Thus, if $\lambda_n\big(H(\zeta) + B(\zeta)\big) = \lambda_{n+1}\big(H(\zeta) + B(\zeta)\big)$ for $\zeta\in \Omega$, then this degeneracy is conical. Finally, after shrinking $\epsi_0$, $\lambda_n\big(H + B\big)$ and $\lambda_{n+1}\big(H + B\big)$ cannot degenerate outside $\Omega$. This shows that $H+B \in \Ll$: $\Ll$ is open in $\LL$.
\end{proof}

\begin{proof}[Proof that $\Ll$ is dense in $\LL$] 1. We show that $\Ll$ is dense in $\LL$. 
Since eigenvalues are Lipschitz functions of the matrix entries, we deduce from \eqref{eq:1e} that there exists $\eta_0 \in (0,1)$ such that for every $T \in \EE$,
\begin{equation}\label{eq:2a}
\| T \| \leq 4\eta_0, \ \ \xi \in \Tt^2 \ \ \Rightarrow  \ \ \systeme{
\lambda_{n+1}\big(H_0(\xi)+T\big) > \lambda_n\big(H_0(\xi)+T\big) \\ \lambda_{n+1}\big(H_1(\xi)+T\big) > \lambda_n\big(H_1(\xi)+T\big)}.
\end{equation}

2. Let $H \in \LL$: $H$ is smooth on $(0,1) \times \Tt^2$, with bounded derivatives; and connects $H_0$ to $H_1$. Seeley's operator \cite{S65} extends $H$ as an element of $C_0^\infty((-\pi,\pi) \times \Tt^2,\EE)$, thus as an element of $C^\infty(\Tt^3,\EE)$ (still denoted $H$).

Let $\chi_0$, $\chi_1 \in C_0^\infty(\Tt^1,[0,1])$ with $\chi_0(0) = \chi_1(1) = 1$ and
\begin{equation}
\supp(\chi_0) \subset (-\delta_0,\delta_0)/(2\pi \Z), \ \ \supp(\chi_0) \subset (1-r_0,1+r_0)/(2\pi \Z), \ \ \ \ r_0 \de \dfrac{\eta_0}{1+\| H \|_{C^1}}.
\end{equation}
For $\HH \in C^\infty(\Tt^3,\EE)$, we introduce
\begin{equation}\label{eq:1b}
\Hh(s,\xi) = \HH(s,\xi) + \chi_0(s) \big( H_0(\xi) - \HH(0,\xi) \big) + \chi_1(s) \big( H_1(\xi) - \HH(1,\xi) \big). 
\end{equation}
We observe that $\Hh$ restricts to $[0,1] \times \Tt^2$ as an element of $\LL$: it varies smoothly with $(s,\xi)$ and connects $H_0$ to $H_1$.

3. Fix $\epsi > 0$. Using \eqref{eq:1b} and that $C^k(\Tt^3,\EE)$ is an algebra, we have $\| H -\Hh \|_{C^k} \leq C_k \| H -\HH \|_{C^k}$ for some $C_k > 0$. As in Step 3 in \S\ref{sec:3.4} there exists $\eta_1 \in (0,\eta_0)$ with
\begin{equation}
d(H,\HH) \leq \eta_1 \ \ \Rightarrow \ \ d(H,\Hh) \leq \epsi.
\end{equation}
We now demand that $\HH \in \Mm$, and $d(H,\HH) \leq \eta_1$; such $\HH$ exist by Theorem \ref{thm:2}. Under these conditions, $\Hh$ defined by \eqref{eq:1b} satisfies $d(H,\Hh) \leq \epsi$; we claim that $\lambda_n(\Hh)$ and $\lambda_{n+1}(\Hh)$ can only degenerate conically in $[0,1] \times \Tt^2$.

4. For $(s,\xi) \in (r_0,1-r_0) \times \Tt^2$, we have $\Hh(s,\xi) = \HH(s,\xi)$. Since $\HH \in \Mm$, we deduce that $\lambda_n(\Hh)$ and $\lambda_{n+1}(\Hh)$ can only degenerate conically in $(r_0,1-r_0) \times \Tt^2$.

For $(s,\xi) \in [0,r_0] \times \Tt^2$, we have
\begin{equations}
\big\| \Hh(s,\xi) - H_0(\xi) \big\| \leq \| \HH(s,\xi)-H_0 \| + \| H_0(\xi) -\HH(0,\xi) \| 
\\
\leq  \| \HH(s,\xi)-\HH(0,\xi) \| + 2\| H_0(\xi) -\HH(0,\xi) \|
\\
\leq r_0 \| \HH \|_{C^1} + 2 \delta \leq r_0 (\delta + \|H\|_{C^1}) + 2 \delta \leq 3\eta_0.
\end{equations}
In the last line, we used the definition of $r_0$ and the inequality $\delta < \eta_0 < 1$. Thanks to \eqref{eq:2a}, we deduce that $\lambda_n(\Hh)$ and $\lambda_{n+1}(\Hh)$ cannot cross in $[0,r_0] \times \Tt^2$. A similar argument shows that they cannot cross in $[1-r_0,1] \times \Tt^2$. 

Hence, the restriction of $\Hh$ to $[0,1] \times \Tt^2$ is in $\Ll$; and $d(H,\Hh) \leq \epsi$. Since $\epsi$ was arbitrary, $\Ll$ is dense in $\LL$. This completes the proof of Theorem \ref{thm:1}.
 \end{proof}

\section{Chern number difference}\label{sec:5}

\begin{proof}[Proof of Theorem \ref{thm:3}] 1. We start with a few notations and definitions. Let $H \in \Ll$. Let $\RRR$ be the set of points $\zeta = (s,\xi) \in [0,1] \times \Tt^2$ such that $\lambda_n\big( H(\zeta) \big) < \lambda_{n+1}\big(H(\zeta)\big)$.  For $\zeta \in \RRR$, we can represent the projector $\Pi_n(\zeta)$ to the first $n$ eigenspaces of $H(\zeta)$ as a Cauchy integral:
\begin{equation}\label{eq:0f}
\Pi_n(\zeta) = 
\dfrac{1}{2\pi i}\oint_{\gamma_n(\zeta)} \big( z - H(\zeta) \big)^{-1} dz,
\end{equation}
where $\gamma_n(\zeta) \subset \C$ encloses $\lambda_1\big( H(\zeta) \big), \dots, \lambda_n\big( H(\zeta) \big)$ but no other eigenvalue of $H(\zeta)$. If $\{s \} \times \Tt^2 \subset \RRR$, then $\Pi_n(s,\cdot)$ induces a vector bundle over $\Tt^2$: the fiber at $\xi \in \Tt^2$ is $\Range\big( \Pi_n(s,\xi)\big)$. We let $\SSS$ be the set of $s \in [0,1]$ such that $\VVV_s$ is not well-defined -- equivalently, $\SSS = \{s \in [0,1]: \ \exists \xi \in \Tt^2, (s,\xi) \notin \RRR\}$.

For $\zeta \in \RRR$, we define
\begin{equation}\label{eq:0g}
B_n(\zeta) =  \Tr_{\C^N} \left( \Pi_n(\zeta) \big[ \p_{\xi_1} \Pi_n(\zeta),  \p_{\xi_2} \Pi_n(\zeta) \right).
\end{equation}
This is a smoothly varying function on $\RRR$, that interprets as the Berry curvature. In particular, $B_n(\xi)d\xi$ is a two-form; and $B_n(\xi)$ is additive: if $\zeta \in \RRR$ and $\lambda_{n-1}\big( H(\zeta) \big) < \lambda_n\big(H(\zeta)\big)$, then $B_n(\zeta) = B_{n-1}(\zeta) + b(\zeta)$, where:
\begin{itemize}
\item $B_{n-1}(\zeta)$ is associated with the projector $\Pi_{n-1}(\zeta)$ to the first $n-1$ eigenspaces of $H(\zeta)$ -- see \eqref{eq:0f}, \eqref{eq:0g} with $n$ replaced by $n-1$;
\item $b(\zeta)$ is associated to the rank-one projector $\pi(\zeta)$ to $\ker\big( \lambda_n \big(H(\zeta)\big) - H(\zeta) \big)$:
\begin{equation}\label{eq:0y}
b(\zeta) =  \Tr_{\C^N} \left( \pi(\zeta) \big[ \p_{\xi_1} \pi(\zeta),  \p_{\xi_2} \pi(\zeta) \big]\right).
\end{equation}
\end{itemize}
For $s \in [0,1] \setminus \SSS$, the Chern number of $\VVV_s$ is the integer
\begin{equation}
c_1(\VVV_s) = \dfrac{1}{2\pi i} \int_{\Tt^2} B_n(s,\xi) d\xi.
\end{equation}

In Step 6, we will use the space of $2 \times 2$ traceless Hermitian matrices $\EE_0$. This space is equipped with the Hermitian inner product $\lr{T_1,T_2} = \Tr(T_1T_2)$; the Pauli matrices $\sigma_1, \sigma_2, \sigma_3$ form an orthonormal basis. If $\tsigma_1, \tsigma_2, \tsigma_3$ is another orthonormal basis, then there exists $U \in SU(2)$ (unique up to multiplication by $\pm \Id_2$) and $\epsilon \in \{\pm 1\}$ such that
\begin{equation}\label{eq:0h}
\tsigma_k  = \epsilon \cdot U \sigma_k U^*, \ \ \ \ 1 \leq k \leq 3.
\end{equation}
This is precisely the content of the isomorphism between $SU(2) / \{\pm \Id_2\}$ and $SO(3)$; see e.g. \cite[\S4.2]{S05}. The number $\epsilon \in \{\pm 1\}$ reads as the determinant of the (orthogonal) matrix of the basis $(\tsigma_1, \tsigma_2, \tsigma_3)$ in the basis $(\sigma_1, \sigma_2, \sigma_3)$.

2. Since $H \in \Ll$, the sets $[0,1] \setminus \RRR$ and $\SSS$ are finite. The map $s \mapsto c_1(\VVV_s)$ is well--defined on $[0,1] \setminus \SSS$. Since it is integer-valued, it is locally constant on each sub-interval of $[0,1] \setminus \SSS$. We deduce that
\begin{equations}\label{eq:0d}
c_1(\VVV_1) - c_1(\VVV_0) = \lim_{\delta \rightarrow 0^+} \ \sum_{s_\star \in \SSS} c_1(\VVV_{s_\star+\delta}) - c_1(\VVV_{s_\star-\delta})
\\
 = \dfrac{1}{2\pi i} \sum_{s_\star \in \SSS} \lim_{\delta \rightarrow 0^+} \int_{\Tt^2} \big(B_n(s_\star+\delta,\xi) - B_n(s_\star-\delta,\xi) \big) d\xi
\end{equations}
It remains to compute each individual summand in the RHS of \eqref{eq:0d}. For that, we use the techniques developed in \cite[\S2]{D19b} -- and we refer to that paper for full details.

3. Fix $s_\star \in \SSS$; let $\ZZZ$ be the set of points $\xi \in \Tt^2$ such that $(s_\star,\xi) \notin \RRR$. Using that $B(\zeta)$ depends smoothly on $\zeta \in \RRR$, we deduce that for $r$ sufficiently small,
\begin{equation}\label{eq:0e}
\int_{\Tt^2} \big(B_n(s_\star+\delta,\xi) - B_n(s_\star-\delta,\xi) \big) d\xi = \sum_{\xi_\star \in \ZZZ} \int_{|\xi-\xi_\star| \leq r} \big(B_n(s_\star+\delta,\xi) - B_n(s_\star-\delta,\xi) \big) d\xi   + O(\delta).
\end{equation}
We refer to the proof of \cite[Lemma 2.1]{D19b} for details. Hence, it suffices to estimate each summand in the RHS of \eqref{eq:0e}.

4. Fix $\zeta_\star = (s_\star,\xi_\star) \in \ZZZ$. Since $H \in \Ll$, $\lambda_n(H)$ and $\lambda_{n+1}(H)$ degenerate conically at $\zeta_\star$. In particular, $\lambda_n\big(H(\zeta_\star)\big) > \lambda_{n-1}\big(H(\zeta_\star)\big)$. Therefore, $\Pi_{n-1}(\zeta)$ -- hence $B(\zeta)$ -- depend smoothly on $\zeta$ near $\zeta_\star$.
Using the additivity of the Berry curvature, we get
\begin{equation}\label{eq:0l}
B_n(\zeta_\star+\epsi) = b(\zeta_\star+\epsi) + O(1),
\end{equation}
for $\epsi$ sufficiently small. We refer to the proof of \cite[(2.21)]{D19b} for details. It remains to understand $b(\zeta)$ near $\zeta_\star$, hence $\pi(\zeta)$ and its derivatives near $\zeta_\star$.

5. Let $\{f_1,f_2\}$ be an orthonormal basis of $\ker\big( H(\zeta_\star)-\lambda_n\big( H(\zeta_\star) \big)$. We define
\begin{equation}
J : \C^N \rightarrow \C^2, \ \ \ Jf = \matrice{\lr{f,f_1} \\ \lr{f,f_2}}.
\end{equation}
We write a Taylor development of the $2 \times 2$ matrix $J H(\zeta) J^*$ near $\zeta_\star$:
\begin{equations}
J H(\zeta_\star+\epsi) J^* = J H(\zeta_\star) J^* + \sum_{j=1}^3 B_j \epsi_j + O(\epsi^2).
\end{equations}
We note that $J H(\zeta_\star) J^* = \lambda_n\big( H(\zeta_\star) \big) \cdot \Id_2$ by definition of $J$. 
We write $B_j$ in the basis of Pauli matrices: $B_j = \sum_{k=0}^3 a_{jk} \sigma_k$. This yields
\begin{equations}\label{eq:0i}
J H(\zeta_\star+\epsi) J^* = \left( \lambda_n\big( H(\zeta_\star) \big) +  \sum_{j=1}^3 a_{j0} \epsi_j \right) \cdot \Id_2 +  \sum_{j, k=1}^3 a_{jk} \sigma_k \epsi_j + O(\epsi^2).
\end{equations}
Let $A_\star$ be the $3 \times 3$ matrix with entries $a_{jk}$, $1 \leq j,k \leq 3$. From \S\ref{sec:3.1},  the eigenvalues of $J H(\zeta_\star+\epsi) J^*$ are
\begin{equation}\label{eq:0r}
\lambda_n\big( H(\zeta_\star) \big) + \lr{a_0,\epsi} \pm |A_\star\epsi| + O(\epsi^2).
\end{equation}
On the other hand, the eigenvalues of $J H(\zeta_\star+\epsi) J^*$ are $\lambda_n\big(H(\zeta_\star+\epsi)\big) + O(\epsi^2)$ and $\lambda_{n+1}\big( H(\zeta_\star+\epsi)\big) + O(\epsi^2)$ -- for details, see the proof of \cite[(2.19)]{D19b}. Since these intersect conically, $A_\star$ must be invertible. 

For $\epsi \neq 0$, the matrix $\sum_{j, k=1}^3 a_{jk} \sigma_k \epsi_j$ has two opposite, distinct eigenvalues. Let $\pi_0(\epsi)$ be the projector to the negative eigenvalue. Then
\begin{equations}
\pi(\zeta_\star+\epsi) = \pi_0(\epsi) + O(|\epsi|), \ \ \ \ \nabla\pi(\zeta_\star+\epsi) = \nabla\pi_0(\epsi) + O(1), 
\\
 \nabla\pi(\zeta_\star+\epsi) = O(|\epsi|^{-1}),  \  \ \ \ \nabla\pi_0(\zeta_\star+\epsi) = O(|\epsi|^{-1}).
\end{equations}
We refer to the proof of \cite[Lemma 2.4]{D19b} for such estimates. It follows that
\begin{equation}
	b(\zeta_\star+\epsi) = b_0(\epsi)  + O(|\epsi|^{-1}), \ \  \ \ \text{where} \ \ b_0(\epsi) =   \Tr_{\C^N} \left( \pi_0(\epsi) \big[ \p_{\xi_1} \pi_0(\epsi),  \p_{\xi_2} \pi_0(\epsi) \big]\right).
\end{equation}
Grouping with \eqref{eq:0l}, we obtain $B_n(\zeta_\star+\epsi) = b_0(\epsi)  + O(|\epsi|^{-1})$. In particular,
\begin{equations}\label{eq:0o}
\int_{|\xi-\xi_\star| \leq r} B_n(s_\star\pm\delta,\xi) d\xi =  \int_{|\xi-\xi_\star| \leq r} b_0(\pm\delta,\xi-\xi_\star) d\xi + O\left(\int_{|\xi-\xi_\star| \leq r} \dfrac{1}{|\xi-\xi_\star|} d\xi\right)
\\
= \int_{|\xi| \leq r} b_0(\pm\delta,\xi) d\xi + O(r).
\end{equations}

6. Since $A_\star$ is invertible, the three matrices $A_j = \sum_{k=1}^3 a_{jk} \sigma_k$, $1 \leq j \leq 3$, form a basis of $\EE_0$. We apply the Gran--Schmidt process to $(A_1, A_2, A_3)$: there exists $(\tsigma_1,\tsigma_2,\tsigma_3)$ orthnormal basis of $\EE_0$ and $(t_{jk}) \in M_3(\R)$ upper triangular with positive elements on the diagonal such that $A_j = \sum_{k=1}^3 t_{jk} \tsigma_k$. 

We write $\tsigma_k = \epsilon_\star \cdot U \sigma_k U^*$, where $\epsilon_\star$ is the determinant of $(\tsigma_1,\tsigma_2,\tsigma_3)$ with respect to $(\sigma_1,\sigma_2,\sigma_3)$ -- see \eqref{eq:0h}. In particular, $\epsilon_\star = \sgn\big( \det(A_\star)\big)$. It follows that
\begin{equation}
A_j = \epsilon_\star \cdot U \left(\sum_{k=1}^3 t_{jk} \sigma_k\right) U^*, \ \ \ \ \sum_{j, k=1}^3 a_{jk} \sigma_k \epsi_j = \epsilon_\star \cdot U \left(\sum_{j, k=1}^3 t_{jk} \sigma_k \epsi_j \right) U^*.
\end{equation}
Hence, $\pi_0(\epsi)$ is, up to conjugation, the projector associated to the negative eigenvalue of $\epsilon_\star \cdot \sum_{j, k=1}^3 t_{jk} \sigma_k \epsi_j$. 

We define more appropriate coordinates
\begin{equation}\label{eq:0k}
\txi_1 = \dfrac{t_{12}\delta + t_{22}\xi_1}{t_{11}\delta}, \ \ \ \ \txi_2 = \dfrac{t_{13}\delta + t_{23}\xi_1 + t_{33}\xi_2}{t_{11}\delta}.
\end{equation}
Using invariance of two-forms under change of coordinates, $b_0(\xi)d\xi = \tb_0(\txi)d\txi$, where $\tb_0(\pm \delta,\txi) d\xi$ is the two-form associated to the negative eigenspace of
$\epsilon_\star \delta \cdot t_{11} \big( \sigma_1 + \sigma_2 \txi_1 + \sigma_3 \txi_2 \big)$. This setup allows us to apply \cite[(23)]{FC13}, which gives:
\begin{equation}
\tb_0(\pm \delta,\txi) = \dfrac{i\epsilon_\star^3 (\pm\delta)^3}{2\delta^3\big(\txi^2_1 + \txi_2^2 + 1\big)^{3/2}} \matrice{1 \\ \txi_1\\ \txi_2  } \cdot \matrice{0 \\ 1 \\ 0} \wedge \matrice{0 \\ 0 \\ 1} = \dfrac{\pm\epsilon_\star}{2\big(\txi^2_1 + \txi_2^2 + 1\big)^{3/2}}.
\end{equation}
Under the change of coordinates \eqref{eq:0k}, the disk $|\xi| \leq r$ gets mapped to an ellipse centered at distance $O(1)$ from the origin, of dimensions $\sim \delta^{-1}$. Thus,
\begin{equation}\label{eq:0n}
\int_{|\xi| \leq r} b_0(\pm\delta,\xi) d\xi = \int_{\R^2} b_0(\pm\delta,\xi) d\xi + O(\delta) = \pm \epsilon_\star \pi + O(\delta).
\end{equation}
We refer to the proof of \cite[Lemma 2.5]{D19b} for details.

7. Grouping  \eqref{eq:0d}, \eqref{eq:0e}, \eqref{eq:0o} and \eqref{eq:0n}, we end up with
\begin{equation}
c_1(\VVV_1)-c_1(\VVV_0) = \sum_{\zeta_\star \in \RRR} \epsilon_\star + O(r+\delta) = \sum_{\zeta_\star \in \RRR} \sgn\big(\det(A_\star)\big) + O(r+\delta).
\end{equation}
Making $\delta \rightarrow 0$, we end up with
\begin{equation}\label{eq:0q}
c_1(\VVV_1)-c_1(\VVV_0) = \sum_{\zeta_\star \in \RRR} \sgn\big(\det(A_\star)\big) + O(r).
\end{equation}
 Taking $r$ sufficiently small, the term $O(r)$ is at most $1/2$. Since both sides of \eqref{eq:0q} are integers, we end up with 
\begin{equation}
c_1(\VVV_1)-c_1(\VVV_0)  = \sum_{\zeta_\star \in \RRR} \sgn\big(\det(A_\star)\big).
\end{equation}
This completes the proof. \end{proof}

\appendix

\section{Continuous approximation}

Let $H \in \LL$ with a conical degeneracy at $(s_0, \xi_0) \in (0,1) \times \Tt^2$ and $\HH^\delta$ defined as in \eqref{eq:3a}. In this appendix, we derive formally the effective Dirac equation \eqref{eq:0p}. It describes the evolution of amplitudes to solutions of $(D_t-\HH^\delta) \psi = 0$ that are initially concentrated (in phase-space) near $(\R e_1+s_0e_2,\xi_0)$.

\subsection{Reduction to $(s_0, \xi_0) = (0,0)$}\label{sec:6.1}  We show that $\HH^\delta$ is unitarily equivalent to an operator with a conical degeneracy at $(0,0)$. Define
\begin{equation}\label{eq:3f}
\tH_s(\xi) = H_{s_0+s}(\xi+\xi_0), \ \ \ \ \tH_{s-s_0}(\xi-\xi_0) = H_s(\xi),
\end{equation}
and $\tHH_s$, $\tHH^\delta$ relative to $\tH$, according to \eqref{eq:3b} and \eqref{eq:3a}. 

For $m \in \Z^2$, set $\ell = m - \big[\delta^{-1} s_0\big] e_2$, where $\big[\delta^{-1} s_0\big]$ stands for the integer part of $\delta^{-1} s_0$.
For $\phi \in \ell^2(\Z^2,\C^N)$, we have:
\begin{equations}
e^{i\xi_0 \ell } \cdot \big( \tHH^\delta \phi \big)(\ell) 
= e^{i\xi_0 \ell } \cdot \big( \tHH_{\delta n_2-s_0}  \phi \big)(\ell) 
\\
 =  \int_{\Tt^2} e^{i(\xi+\xi_0) \ell} \cdot \tH_{\delta n_2-s_0}(\xi) \hphi(\xi) \cdot \dfrac{d\xi}{(2\pi)^2} 
 =  \int_{\Tt^2} e^{i\xi \ell} \cdot \tH_{\delta n_2-s_0}(\xi-\xi_0) \hphi(\xi-\xi_0) \cdot \dfrac{d\xi}{(2\pi)^2}
 \\
 = \int_{\Tt^2} e^{i\xi \ell} \cdot H_{\delta n_2}(\xi) \widehat{e^{i\xi_0 \cdot}\phi}(\xi) \cdot \dfrac{d\xi}{(2\pi)^2} = \left(\HH^\delta e^{i\xi_0 \cdot} \phi\right) (\ell).
\end{equations}
This means that $U \tHH^\delta U^* = \HH^\delta$, where
\begin{equation}
U \phi(m) = \left(e^{i\xi_0 \cdot} \phi\right)\big( m - [\delta^{-1}s_0] e_2\big), \ \ \ \ \ U^* \phi(m) = e^{i\xi_0 m} \cdot \phi\big( m + [\delta^{-1}s_0] e_2 \big).
\end{equation}

\subsection{Effective equation} Since $H_s(\xi)$ has a conical degeneracy at $(s_0,\xi_0)$, there exists $f_1, f_2 \in \C^N$  satisfying \eqref{eq:1g}. As $\delta \rightarrow 0$, we derive (formally) the leading asymptotics of $\HH^\delta \phi$, where 
\begin{equation}\label{eq:3g}
\phi(m) = e^{i\xi_0 m} \cdot \sum_{j=1}^2 \az_j(s_0 e_2+\delta^{1/2} m) f_j \ \in \ \ell^2(\Z^2,\C^N), \ \ \ \ \ \az \in C_0^\infty(\R^2,\C^N).
\end{equation}

After rescaling, $\phi$ is semiclassically (scale $\delta$) localized near $(\R e_1 + s_0e_2,\xi_0)$. We write \eqref{eq:3g} as $\phi \simeq U^* \vp$, where $\vp(m) = \sum_{j=1}^2 \az_j(\delta^{1/2} m) f_j = J^* \az(\delta^{1/2} m)$, 
and $J : \C^N \rightarrow \C^2$ is the operator of \eqref{eq:1g}. Using a Riemann sum argument, we observe that as $\delta \rightarrow 0$,
\begin{equation}\label{eq:3j}
\hspace{-3mm}
\delta \cdot \hvp(\delta^{1/2} \xi) = J^* \left( \delta \sum_{m \in \Z^2} e^{-i \delta^{1/2} \xi m} \az(\delta^{1/2} m) \right) \simeq J^* \left( \int_{\R^2} e^{-i \xi x} \az(x) dx\right) = J^* \haz(\xi).
\end{equation}

Thanks to \S\ref{sec:6.1}, we have $\HH^\delta \phi = \HH^\delta U^* \vp = U^* \tHH^\delta \vp$. Now, we compute $\tHH^\delta \vp$:
\begin{equations}\label{eq:3h}
\left(\tHH^\delta \vp \right)(m) =  \left(\tHH_{\delta m_2} \vp\right)(m)   =  \int_{\Tt^2} e^{i\xi m} \cdot \tH_{\delta m_2}(\xi)  \hvp(\xi) \cdot \dfrac{d\xi}{(2\pi)^2}
\\
=  \int_{\delta^{-1} \Tt^2} e^{i\delta^{1/2} \xi m} \cdot H_{\delta m_2}(\delta^{1/2} \xi)  \hvp(\delta^{1/2} \xi) \cdot \dfrac{\delta d\xi}{(2\pi)^2},
\end{equations}
where we made the substitution $\xi \mapsto \delta^{1/2} \xi$. Since $\vp$ is spectrally concentrated near $0$, it is reasonable to replace the integration domain in \eqref{eq:3h} to $\R^2$. Using \eqref{eq:3j}, we get
\begin{equation}\label{eq:3m}
\left(\tHH^\delta \vp \right)(m) \simeq  \int_{\R^2} e^{i\delta^{1/2} \xi m} \cdot H_{\delta m_2}(\delta^{1/2} \xi) J^* \haz(\xi) \cdot \dfrac{d\xi}{(2\pi )^2}.
\end{equation}

The identity \eqref{eq:1g} allows us to expand $H_{\delta m_2}(\delta^{1/2} \xi) J^*$ as
\begin{equation}\label{eq:3l}
H_{\delta m_2}(\delta^{1/2} \xi) J^* \simeq J^* \big( E_0 + \delta^{1/2} \cdot \Di(\delta^{1/2} m_2, \xi) \big), 
\end{equation}
where $\Di(s,\xi)$ is a family of $2 \times 2$ matrices depending linearly on $(s,\xi)$, and $E_0$ is the energy of the conical crossing. Plugging \eqref{eq:3l} into \eqref{eq:3m}, we obtain
\begin{equation}
\big(\tHH^\delta \vp \big)(m) \simeq J^*\int_{\R^2} e^{i\delta^{1/2} \xi m} \cdot \big( E_0 + \delta^{1/2} \Di(\delta^{1/2} m_2, \xi) \big) \haz(\xi) \cdot \dfrac{d\xi}{(2\pi )^2} = J^* \big( E_0 + \delta^{1/2} \Di\big) \az (\delta^{1/2} m),
\end{equation}
where $\Di = \Di(x_2, D_x)$ is a Dirac operator. Since $\vp(m) = J^*\az(\delta^{1/2}m)$, this means that $J^*$ approximately intertwines between $\tHH^\delta$ and $E_0 + \delta^{1/2} \Di$, for adiabatic data. 

Up to a phase and a time-rescaling, the equations $D_t - E_0 - \delta^{1/2} \Di$ and $D_t - \Di$ are equivalent. Using the above intertwining, we conclude that 
 $(D_t-\HH^\delta) \psi = 0$ has approximate solutions whose asymptotics are slow linear combinations of $f_1$ and $f_2$:
\begin{equation}
e^{i(E_0 t + \xi_0 m)} \cdot \sum_{j=1}^2 \beta_j\left( \delta^{1/2} t, \ s_0 e_2 + \delta^{1/2} m \right) f_j,
\end{equation}
with amplitudes $\beta_j(t,x)$ solving the Dirac equation \eqref{eq:0p}: $(D_t - \Di) \beta = 0$.

\end{document}